\newtheorem{thm}{Theorem}
\newtheorem{lem}[thm]{Lemma}
\newtheorem{corl}[thm]{Corollary}
\newcounter{saveissueFi}
\newcounter{saveissueMi}
\begin{document}
\title{A Design Methodology for Folded, Pipelined Architectures
in VLSI Applications using Projective Space Lattices}

\author{ \normalsize{Hrishikesh Sharma} ~~~~~~~~~~~~~~~~~~~~~~~~~~~~~~~~~~~~~~~~~~~~~~~~~~~~~~~ \normalsize{Sachin Patkar} \\
\normalsize{Department of Electrical Engg., Indian Institute of
Technology, Bombay, India}
}

\maketitle

\begin{abstract}
Semi-parallel, or folded, VLSI architectures are used whenever hardware
resources need to be saved at design time. Most recent applications that
are based on Projective Geometry\,(PG) based balanced bipartite graph also
fall in this category. In this paper, we provide a high-level, top-down
design methodology to design optimal semi-parallel architectures for
applications, whose Data Flow Graph\,(DFG) is based on PG bipartite graph.
Such applications have been found e.g. in error-control coding and matrix
computations. Unlike many other folding schemes, the topology of
connections between physical elements does not change in this methodology.
Another advantage is the ease of implementation. To lessen the throughput
loss due to folding, we also incorporate a \textit{multi-tier}
pipelining strategy in the design methodology. The design
methodology has been verified by implementing a synthesis tool in C++,
which has been verified as well. The tool is publicly available.
Further, a complete decoder was manually protototyped before the synthesis tool design, to
verify all the algorithms evolved in this paper, towards various steps of
refinement. Another
specific high-performance design of an LDPC decoder based on this
methodology was worked out in past, and has been patented as well.
\end{abstract}

\begin{keywords}
Design Methodology, Parallel Scheduling and Semi-parallel Architecture
\end{keywords}

\section{Introduction}
\label{intro_sec}

A number of naturally parallel computations make use of balanced bipartite
graphs arising from finite projective geometry \cite{hoholdt},
\cite{expanders}, \cite{mat_pap}, \cite{fossorier}, and related
structures \cite{rakov1}, \cite{rakov2}, \cite{nschau} to
represent their data flows. Many of them are in fact, recent
research directions, e.g. \cite{hoholdt}, \cite{rakov1},
\cite{fossorier}. As the dimension of the
finite projective space is increased, the corresponding graphs grow both in size
and order. Each vertex of the graph represents a LPU, and all
the vertices on one side of the graph \textbf{compute in parallel}, since there
are no data dependencies/edges between vertices that belong to one side of
a bipartite graph.  The number of such parallel LPUs is
generally of the order of tens of thousands in practice for various
reasons as noted below.

It is well-known in the area of error-control coding that higher the length
of error correction code, the closer it operates to Shannon limit of
capacity of a transmission channel \cite{fossorier}. The length of a code
corresponds to size of a particular bipartite graph, Tanner graph, which is
also the data flow graph for the decoding system \cite{fossorier}.
Similarly, in matrix computations, especially LU/Cholesky decomposition for
solving system of linear equations, and iterative PDE solving (and the
sparse matrix vector multiplication sub-problem within) using conjugate
gradient algorithm, the matrix sizes involved can be of similar high order.
A PG-based parallel data distribution can be imposed using suitable
interconnection of processors to provide \textbf{optimal} computation time
\cite{mat_pap}, which can result in quite big setup (as big as a petaflop
supercomputer).  This setup is being targeted in Computational Research
Labs, India, who are our collaboration partners. Further, at times,
\textbf{increasing} the dimension of finite projective geometry used in a
computation has been found to improve application performance
\cite{expanders}. In such a case, the number of LPUs grows
\textit{exponentially} with the dimension again. For \uline{practical} system
implementations with good application performance, it is
generally not possible to
have a large number of LPUs running in parallel, since that
incurs high manufacturing costs. In VLSI terms, such implementations may
suffer from relatively large area, and are also not scalable. Here,
\textit{scalability} captures the ease of using the same architecture for
extensions of the application that may require different throughputs, input
block sizes etc. A folded architecture can generally provide area reduction
and scalability as advantages instead, while trading off with system
throughput. We have therefore focused on designing \textbf{semi-parallel},
or folded architectures, for such PG-based applications.

The applicability of such schemes may not be that
widespread, given the current ULSI levels of integration. Still,
there are application areas in ASIC design, where direct
interconnect is still more pertinent (e.g., \cite{expanders} and
\cite{ldpc_pap}). This is because the required interconnect is a
\textit{sparse} interconnect. In fact, most practical designs reported here
are of semi-parallel nature. With such applications in mind, we a folding
scheme over next few sections.

As such, folding of VLSI architectures especially for communications and
signal processing systems is has been well-known \cite{parhi_book}.
However, the algorithms involved, such as register minimization algorithms,
are \textit{generic} in nature, and at times, iterative. We present
\textit{much
simpler set of algorithms} for folding for the target class of
applications.

In this paper, we \textbf{first} present a scheme for folding PG-based
computations efficiently, which allows a practical implementation with the
following advantages.

\begin{enumerate}
\item The number of on-chip \textit{logical} processing units required, is reduced.
\item No processing unit is ever idle in a machine cycle.
\item A schedule can be generated which ensures that there are no
        memory access conflicts between \textit{logical} processing units, for each
        (\textit{logical}) memory unit.
\item The same set of wires can be used to schedule communication of data
between memory units and processing units that are physically used across
multiple folds, \textbf{without} changing their interconnection.
\item Data distribution among the memories is such that the address
        generation circuits are simplified to counters/look-up tables.
\end{enumerate}

As an additional advantage of using this folding scheme, the
\textit{communication architecture} can be chosen to be
\textbf{point-to-point}. This is because same set of wires can be reused across
multiple folds, due to overlay (i.e., without reconfiguring their end points
at run time). This significantly reduces the amount of wiring resources
that are needed physically. Hence, a \textit{point-to-point} interconnection becomes
generally feasible after such folding. Such overlay-based custom
communication architecture leads to optimal performance, as will be brought
out in the paper. Generally, folding leads to overlay of computation, while
here, it \textit{simultaneously} leads to overlay of communication. Hence this
scheme can also be \textbf{alternatively} viewed as one of evolving
\textbf{custom} \textit{communication architecture}.

In general, custom communication architectures attempt to address the
shortcomings of standard on-chip communication architectures by utilizing
new topologies and protocols to obtain improvements for design goals, such
as performance and power. These novel topologies and protocols are often
\textit{customized} to suit a particular application, and
typically include optimizations to meet application-specific design goals.
In our case, the
foldable point-to-point communication is optimized towards PG-based
applications pointed out earlier.

This scheme forms the core of the \textit{design methodology} that is our main
contribution. The scheme is based on simple concepts of
modulo arithmetic, and circulance of PG-based balanced bipartite
graphs. It is an \textit{engineering-oriented},
practical alternative to another scheme based on vector space partitioning
\cite{dam_pap}. The core of that scheme is based on adapting the method of
vector space partitioning \cite{vs_part} to projective spaces, and
hence involves fair amount of mathematical rigor. A
\textbf{restricted} version of that scheme, which partitions the vector
space in a novel way, was worked out earlier using different methods
\cite{cacs_pap}. All this work was done as part of a research theme of
evolving \textit{optimal} folding architecture design methods, and also
applying such methods in real system design. As part of second goal, such
folding schemes have been used for design of specific decoder systems
having applications in secondary storage \cite{ldpc_foldpat}, \cite{expanders}.

The \textit{target} of this design methodology is to design specialized IP cores,
rather than a complete SoC. The methodology uses four levels of model
refinements. The level of details at these \textit{refinement levels} turn
out to be \textit{very similar} to the four levels in SpecC system-level
design methodology by Gajski et al \cite{spec_c_methodology}. Details of
this similarity are provided in section \ref{mod_ref_sec}. The latter
methodology was targeted for bus-based system designs. Still, the
similarity points to the fact that implementing a practical, custom
synthesis-based design flow for this methodology can indeed be worked out.
We have chosen to use the synthesizable subset of any popular HDL, to model
various sub-computations of various overall PG-based computations, for whom
we intend to automatically design (various) folded architectures.
Practically, the \textit{custom} design flow for this design methodology must
hand over at some point, RTL models to e.g. some \textit{standard} ASIC/FPGA design
flow. A \textit{case study} of successfully using this
design flow for prototyping a VLSI system is described in section
\ref{exp_sec}. The section also presents some details about the \textit{C++
tool} that has been implemented, to realize this methodology.

In this paper, we begin by giving a \textit{brief} introduction to
Projective Spaces in section \ref{sec2}, which is easy to grasp. It is
followed by a model of the nature of computations covered, and how they can
be mapped to PG based graphs, in section \ref{comp_model_sec}.  Section
\ref{fold_conc_sec} introduces the concept of folding for this model of
computation. The basic constructs for optimal scheduling, perfect access
patterns and sequences are introduced in section \ref{karm_sec}. Section
\ref{bipart_fold_sec} sketches out what kind of folding is desired from
regular bipartite graphs, while section \ref{outline_sec} brings out how
PG-based balanced regular bipartite graphs can be folded so,
\textit{optimally}. The details of various aspects of the design
methodology are brought out in section \ref{methodology_sec} next.
Especially, section \ref{aux_issue_sec} covers the detailed design problems
that are enlisted in section \ref{issues_sec}.  A scheme for pipelining the
folded designs to recover back some throughput, that is lost due to
trade-off, is covered in sections \ref{pipeline_sec}.  In section
\ref{mod_ref_sec}, we bring out the practical way of using this
methodology.  A note on addressing scalability concern in our design is
provided in section \ref{scale_sec}.  We provide specifications of some
real applications that were built using this methodology, in the
experiments section\,(section \ref{exp_sec}), before concluding the paper.

\section{Projective Spaces}
\label{sec2}

Projective spaces and their lattices are built using vector subspaces of
the \textbf{bijectively} corresponding vector space, one dimension high,
and their subsumption relations. Vector spaces being extension fields,
Galois fields are used to practically construct projective spaces
\cite{expanders}. However, throughout this work, we are mainly concerned
with subgraphs arising out lattice representation of Projective spaces,
which we discuss now. An overview of generating projective spaces from
finite fields can be found in \ref{appA}.

\begin{figure}[h]
\begin{center}
\includegraphics[scale=0.5]{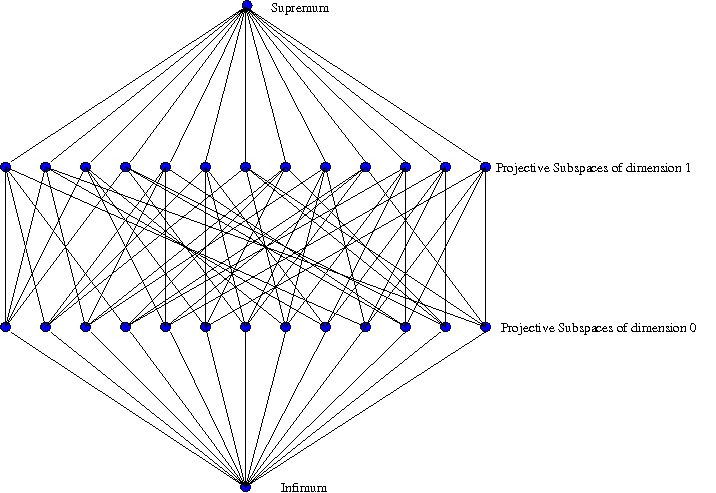}
\end{center}
\caption{A Lattice Representation for 2-dimensional Projective Space,
$\mathbb{P}(\mathbf{2},\mathbb{GF}(\mathbf{3}))$}
\label{pg_lat}
\end{figure}

It is a well-known fact that the lattice of subspaces in any projective
space is a \textbf{modular, geometric lattice} \cite{dam_pap}. A projective
space of dimension {\normalsize \textbf{2}} is shown in figure
\ref{pg_lat}. In such figure, the top-most node represents the
\textit{supremum}, which is a projective space of dimension {\normalsize
\textbf{m}} over Galois Field of size \textbf{q}, in a lattice for
{\normalsize $\mathbb{P}(\mathbf{m},\mathbb{GF}(\mathbf{q}))$}. The
bottom-most node represents the \textit{infimum}, which is a projective
space of (notational) dimension {\normalsize \textbf{-1}}. Each node in the
lattice as such is a projective subspace, called a \textbf{flat}. Each
horizontal level of flats represents a collection of all projective
subspaces of {\normalsize $\mathbb{P}(\mathbf{m},\mathbb{GF}(\mathbf{q}))$}
of a particular dimension. For example, the first level of flats above
infimum are flats of dimension {\normalsize \textbf{0}}, the next level are
flats of dimension {\normalsize \textbf{1}}, and so on. Some levels have
special names. The flats of dimension {\normalsize \textbf{0}} are called
\textit{points}, flats of dimension {\normalsize \textbf{1}} are called
\textit{lines}, flats of dimension {\normalsize \textbf{2}} are called
planes, and flats of dimension {\normalsize (\textbf{m-1})} in an overall
projective space {\normalsize
$\mathbb{P}(\mathbf{m},\mathbb{GF}(\mathbf{q}))$} are called
\textit{hyperplanes}. Many PG-based applications have models that are based
on two levels in this diagram, and connections based on their
\textit{inter-reachability} in the lattice. Out of these, the
\textit{balanced regular bipartite graphs made out of levels of points and
hyperplanes} have been used more often, because usually the applications
require the graph to have a high node degree, which this graph provides.

\subsection{Circulant Balanced Bipartite Graph}
A circulant balanced bipartite graph is a graph of {\normalsize \textbf{n}}
graph vertices on each side, in which the {\normalsize $\mathbf{i}^{th}$}
graph vertex of each side is adjacent to the {\normalsize
$\mathbf{(i+j)}\text{(modulo-n)}^{th}$} graph vertices of other side, for
each {\normalsize \textbf{j}} in a list \textbf{L} of vertex indices from
other side. A point-hyperplane incidence bipartite graph made from PG
lattice is a circulant graph; see Fig. \ref{pg_bbg}. We will be exploiting the circulance property
of PG bipartite graphs in our folding scheme.

\begin{figure}[h]
\begin{center}
\includegraphics[scale=0.7]{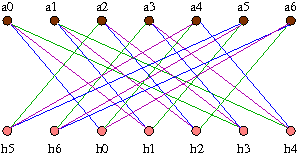}
\end{center}
\caption{An Example PG Circulant Bipartite Graph}
\label{pg_bbg}
\end{figure}

As will become clear from the constructive proof of main theorem \ref{th1},
this scheme can be extended to cover design of any system, whose DFG
exhibits a bipartite circulant nature, of any order.  However, a
\textit{practical} design methodology must target design of real systems.
Hence we stick to PG-based applications as our target real application area
of this design methodology.

\section{A Model for Computations Involved}
\label{comp_model_sec}
As mentioned earlier, we will be using a PG bipartite graph made from
points and hyperplanes in a PG lattice. In such graph, each point,
as well as hyperplane, is mapped to a unique vertex in the graph. Further,
a point and a hyperplane are
incident on one-another in this bipartite graph, if they are reachable via
some path in the corresponding lattice diagram. We state without proof,
that such bipartite graph is both balanced\,(both sides have same number of
nodes) and regular\,(each node on one side of graph has same
degree). For the proof, see \cite{hrishi_thesis}.

The computations that can be covered using this design scheme are mostly
applicable to the popular class of \textit{iterative} decoding algorithms
for error correcting codes, like Low-density Parity-check (LDPC) \cite{ldpc_survey}, polar
\cite{polar_pap} or expander codes \cite{sipspiel}. A representation
of such computation is generally available as a bipartite graph, though
it may go by some other \textit{domain-specific
name} such as \textit{Tanner Graph}. The nodes on each side of the bipartite graph
represent sub-computations (sequential circuits), which do not have any precedence orders. Hence
they can all be made to execute computations parallely. The
edges represent the data that is exchanged between nodes performing
sub-computations. Also, the nature of computation algorithm being
considered is such
that nodes on one side of the graph compute first, then nodes on the other
side of the graph. If the computation is iterative, then the computation
schedule so far is just repeated again and again. Such a schedule is
popularly known as \textit{flooding schedule}, since all nodes of
one side simultaneously send out data to nodes on other side. A bipartite graph is
undirected, and hence for \textit{visualization} as a \textit{Data Flow
Graph}\,(DFG), each of its 
edge can be replaced with two opposite-directed edges. Such an expansion is depicted in
figure \ref{dfg_fig}. Such a
refinement of problem model is \textit{only for conceptual clarity}, and not implemented in the
corresponding design flow. 
        Such a DFG may model both SIMD as well as MIMD
systems. Since we target design of PG-based applications using
this methodology, we assume throughout the remaining text that
\begin{enumerate}
\item The nature of parallel computation is SIMD.
\item The computation function realized by any node, is any computation
that can be realized using the a particular \textit{synthesis
subset} of various HDLs, described in section \ref{non_simd_pap_sec}.
\end{enumerate}

\begin{figure}[h]
\begin{center}
\includegraphics[scale=0.6]{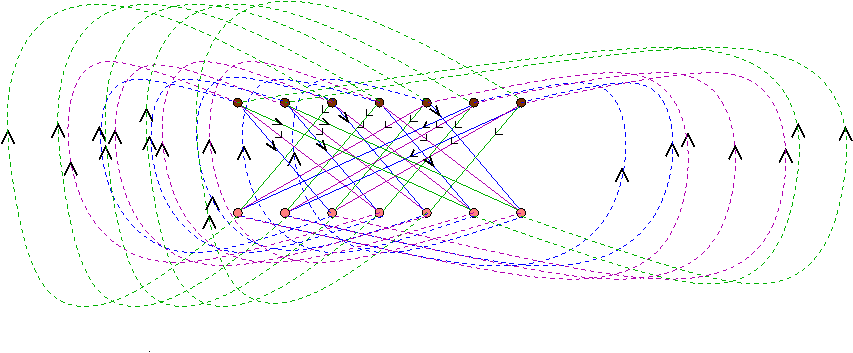}
\end{center}
\caption{A Visualization of Bipartite Graph as a Data Flow Graph}
\label{dfg_fig}
\end{figure}

Relaxing these assumptions leads to a tradeoff between optimality of system
performance, and ease of system implementation. Details of this tradeoff
can be found in section \ref{non_simd_pap_sec}.

After finishing the computations, nodes on any one side of the bipartite
graph transfer the resultant data for consumption of nodes on other side of
the graph, \textit{via} distributed storage in memory units. Usage of
\textbf{distributed memory} is common and fundamental requirement to folding the
graph using this method. Thus, one LMU per
node, just before its input along the data flow direction, is
the \textit{minimum} requirement for storing data which is
transferred within a bipartite graph\footnote{At times, to implement interconnect
pipelining to reduce signal delays in practical physical design of such
systems, memory elements may also be present at the output.}.
An easy way of implementing distributed memory on both sides is to collocate
local/on-chip memory of each physical node with each required PMU.

\section{Conflict-free Communications Primitives for PG Graphs}
\label{karm_sec}
The scheduling model used in the folding scheme is based on
\textit{Karmarkar's template}\cite{karm1}. PG lattices possess structural
regularity in form of \textit{circulance}, and this property has been
exploited in scheduling of general parallel systems. Karmarkar was able to
come up with a parallel communication method to realize various ``nice
properties'' in scheduling, which are enlisted later in the section. He
discovered two \uline{memory-conflict free communication primitives} using
bipartite graphs derived from 2-dimensional Projective Space Lattices
\cite{karm1}.

\begin{figure}[!h]
\centering
\includegraphics[scale=0.7]{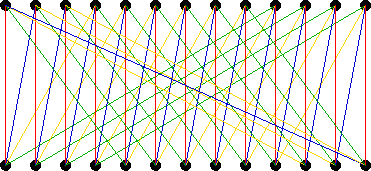}
\caption{Perfect Access Primitives in a PG balanced bipartite graph}
\label{pg_tan}
\end{figure}

Let {\large \textbf{n}} processing units be placed in place denoted by the
lines, and {\large \textbf{n}} memory units placed in place denoted by the
points, in a PG bipartite graph. Consider a binary operation that is to be
scheduled on these processing units in SIMD fashion. Let it take two
operands as inputs (reads from two memory locations) per cycle, and modify
one of them (writes back in one memory location) as output. The binary
operation is preferred since the required memory
unit is then a \textbf{dual-port} memory, something that is easily
commercially-off-the-shelf (COTS) available. The schedule of
memory accesses for a collection of such operations, that corresponds to a
\textbf{particular} complete set of line-point index-pairs, for
simultaneous parallel execution over one cycle on all processing units is
known as a \textbf{Perfect Access Pattern}. A set of such patterns, with
some application-defined order imposed on them, is known as a
\textbf{Perfect Access Sequence}. Such \textbf{particular} complete set of
line-point index pairs is generated by exploiting circulant nature of PG
bipartite graph. On \textit{each node} on \textbf{one} side of the graph, two edges are
chosen such that they are shift-replicas of the two edges chosen for its
neighboring node.  For example, in figure \ref{pg_tan}, the set of 13 red
and 13 green edges forms one \textit{Perfect Access Pattern}, and 13 yellow
and 13 blue edges another \textit{Perfect Pattern}. These two perfect
patterns (like these two), when sequenced in \textit{arbitrary} order,
form a \textit{Perfect Access Sequence}.  The properties of such an
execution of processing
unit-memory unit communication are as follows \cite{karm1}.
\begin{enumerate}
\item There are no read or write conflicts in memory accesses.
\item There is no conflict or wait in processor usage.
\item All processors are fully utilized.
\item Memory bandwidth is fully utilized.
\end{enumerate}

\subsection{Generalization}
The cost of a perfect access sequence is \textit{{\large
$\mathbf{\gamma/2}$} cycles}, where {\large $\mathbb{\gamma}$} is
the degree of each node in bipartite graph. There
can be possibly \uline{alternative communication primitives}, which can
have different communication costs over the same projective plane.
Generalizing
beyond binary operation scheduling to \textbf{n}-ary operation scheduling on
computing nodes reduces the communication cost, but leads to complexity
of the memory unit controller's design/area/power.

Practically, there are many parallel computational
problems, implementable in hardware, whose communication graph has been
derived out of higher-dimensional projective spaces. Two such problems,
that were worked out by us, are LU decomposition\,(exploiting a
\textbf{4}-dimensional underlying projective space) \cite{mat_pap}, and the
DVD-R decoder\,(exploiting a \textbf{5}-dimensional underlying projective
space) \cite{expanders}. In \cite{fold2_techrep}, it is proven in detail
that Karmarkar's scheme of decomposing a projective plane into perfect
access patterns can indeed be extended to point-hyperplane graphs of
\textbf{arbitrary} dimensional Projective Space. For sake of brevity, the
proof is not repeated here.

\subsection{Suitability of Perfect Access Patterns for Other Computations}
\label{non_simd_pap_sec}

We explain now that \textbf{any}
\textit{synthesizable} \textbf{sequential logic} can represent the computation
meant by the `single instruction' in SIMD model, as long as in its
multi-input Mealy machine representation,
each transition is governed by arrival of a particular input
signal, and \textbf{not on the value} of the
signal. Thus, in a given state, we assume that such FSM, in a
given state, accepts a compatible signal arrival event, transitions into a
\textit{unique} state, and optionally outputs a \textit{unique} set of signals,
\textit{irrespective} of the value of the input signal. In our computation model, each
input edge incident on a vertex is treated as a signal. Multiple inputs can
arrive simultaneously in sequential logic, in which case the event is a
compound signal event. Since we use SIMD
model, the labeling of edges of all vertices on one side of bipartite
graph, to represent signals, can be made \textbf{isomorphic} easily. Such
labeling allows FSMs of all the node computations to move in \textbf{synchronized
fashion}, requiring \textit{inputs in same sequence} on all
nodes on one side of bipartite graph. This is because FSM
model of any sequential logic computation imposes a legal order requirement on its inputs
, in order to reach its end state. Further, the legally ordered set of such inputs
required by the `single instruction' may not cover the complete set of
possible inputs (edges) on each node. As long as \textbf{same} subset
of inputs, in \textbf{same} sequence, is needed by each node to reach their end
states, the collection of such \textit{subsequences} can be used as a
\uline{perfect access sequence} required by the
computation of `single instruction'. These
subsequences must be synchronized at each clock cycle, for load balancing;
there cannot be \textit{gaps} in their scheduling.  We can then
break such common sequence into perfect access patterns, and use
the basic result
of folding a perfect access pattern (see theorem \ref{th1}) to optimally schedule
each such computations. Because we have the choice of picking up order
while forming a \textit{perfect access sequence} from the set of
\textit{perfect access patterns} (see section \ref{karm_sec}), we also have
a choice in scheduling and ordering the input arrivals. Thus, we can always
force the same order, as required by the sequential logic, on the perfect
access `sequence'.
A combinational logic computation is treated as a special case of
sequential logic computation.

The application classes that we realistically target (described in section
\ref{intro_sec}) have computations (e.g. accumulation operator), that
naturally obey the restriction described
above. Their multi-input Mealy machine
model is a set of disjoint equal-length paths, between unique
\textit{start} and \textit{end} states. The length of each path is
$\gamma$, i.e. each legal input sequence to the state machine requires signals
on \textbf{all} edges to arrive, in some \textit{permutation order}, before
completion of computation. The number of such paths in these models is
equal to $\gamma!$, though in our \textit{generalized} model, it can be
$\leq$ $\gamma!$.

Going further, \textit{suppose} we \textbf{relax} the SIMD
assumption, and assume MIMD model of computation for the system under
design. In such a case, there will be no restriction whatsoever on the
sub-computation that is happening on each node in a particular
cycle, and their computation times. The
computations may be different, e.g.  addition and subtraction. As long as
all nodes on one side of the graph operate on the same number of operands
at a time, and take same number of cycles to complete, the foldability of
graph derived in this report will remain applicable. One may further relax
the same computation time constraint on these sub-computations, by
implementing a \textit{barrier synchronization} on either side of the data
flow graph. All such relaxations need to be \textit{annotated/added} to the
system model (Tanner Graph), and hence form the \uline{\textbf{first}
level of refinement} (specification refinement) of the DFG, which is an
\textbf{optional level}. It is straightforward to notice that while
applying this design methodology to MIMD systems retains the ease of
engineering the system, as in SIMD case, there are chances that the system
may lose some amount of performance optimality (e.g. due to mandatory
barrier synchronization).

\section{The Concept of Bipartite Graph Folding}
\label{fold_conc_sec}
Semi-parallel, or folded architectures are hardware-sharing architectures,
in which hardware components are \uline{shared/overlaid} for performing
different parts of computation within a (single) computation. As such,
folded architectures are a result of fundamental space-time 
\textit{complexity} trade-off in (parallel) computation. This in turn
manifests in form of area-throughput trade-off during its (parallel)
implementation.

In its basic
form, folding is a technique in which more than one algorithmic operations
of the same type are mapped to the same hardware operator. This is achieved
by \uline{time-multiplexing} these multiple algorithm operations of the
same type, onto \textit{single} computational unit at system run-time.
Hereafter, we define \uline{\textbf{logical} processing unit(\textbf{LPU})} as the
\textit{logical} computational unit associated with each node of the graph,
while \uline{\textbf{physical} processing unit(\textbf{PPU})} as the
\textit{physical} computational unit associated with each node of the graph.
Multiple LPUs get \textit{overlaid} on single PPU, after
folding. We also define the equivalent term for \textit{overlaid}
memory unit as \uline{\textbf{physical} memory unit(\textbf{PMU})}, which
is an overlay of multiple \uline{\textbf{logical} memory units(\textbf{LMUs})}.

\begin{figure}[h]
\begin{center}
\includegraphics[scale=0.4]{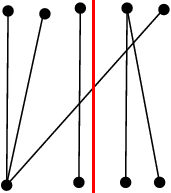}
\end{center}
\caption{(Unevenly) Partitioned Bipartite DFG}
\label{fold_bp}
\end{figure}

The balanced bipartite PG graphs of various target applications perform
parallel computation, as described in section \ref{comp_model_sec}.  In its
\textbf{classical} sense, a folded architecture represents a partition, or
a \textit{collection of} \textbf{folds}, of such a (balanced) bipartite graph\,(see figure
\ref{fold_bp}). The blocks of the partition, or folds can themselves be
\textit{balanced} or \textit{unbalanced}; partitioning with unbalanced block sizes entails no
obvious advantage.  The \textit{computational} folding can be implemented
after (balanced) graph partitioning in \textbf{two dual ways}. In
the \textbf{first} way, that is
used in \cite{cacs_pap}, \cite{dam_pap}, the \uline{within-fold}
computation is done
\textit{sequentially}, and \uline{across-fold} computation is done
\textit{parallely}. Such a scheme is generally
called a \textit{supernode-based folded design}, since a \textit{logical}
supernode is held responsible for operating over a fold. \textbf{Dually},
the \uline{across-fold} computation can be made \textit{sequential} by scheduling
first node of first fold, first node of second fold, $\ldots$
\textit{sequentially} on a single module.  The \uline{within-fold}
computations, held by various nodes in the fold, can hence be made
\textit{parallel} by scheduling them over different hardware modules. This
scheme is what we cover in this paper.
Either way, such a folding is represented by a time-schedule, called the
\textbf{folding schedule}. The schedule tells that in each machine cycle,
which all computations are parallely scheduled on various
PPUs,
and also the \textit{sequence} of clusters of such parallel computations
across machine cycles.

\subsection{Folding PG-based Bipartite Graphs}
\label{bipart_fold_sec}
We first sketch out, how a PG bipartite graph can be folded. Generally
folding is performed by partitioning the vertex sets of the bipartite
graph, and \textit{overlaying} them on various available PPUs.
As such, general folding schemes \textit{are not} able to overlay the edge sets onto
each other. It potentially results in reconfiguring the
interconnection between physical units at run-time, whenever a new
fold has to be scheduled. What \textbf{stands
out} in case of using our folding scheme is that edges also
get overlaid. Hence the \textit{entire run-time overhead} of reconfiguring the
interconnect via various mux selections \textit{is saved}.

In a PG balanced bipartite graph made from points and hyperplanes of
{\normalsize $\mathbf{n}$}-dimensional projective space over {\normalsize $\mathbb{GF}(p^s)$},
{\normalsize $\mathbf{P}(n,\mathbb{GF}(p^s))$}, the number of nodes on either side is
{\large $\mathbf{J}$} = {\large $\mathbf{\frac{p^{s (n+1)}-1}{p^s-1}}$}, while the degree of
each node is {\large $\mathbf{\gamma}$} = {\large $\mathbf{\frac{p^{s
n}-1}{p^s-1}}$}. Here,
{\normalsize \textbf{p}} is any prime number, while {\normalsize \textbf{s}} is any
natural number.
For vertex partitioning, as discussed earlier, we choose to have e.g. $\mathbf{1}^{st}$ PPU performing $\mathbf{1}^{st}$ left node computation in a cycle, then $\mathbf{5}^{th}$
left node computation in next cycle, and so on. By doing so, it so happens,
as we prove later, that the destination vertex of each edge incident on
various nodes across various partitions of one side of the graph,
that are mapped to same PPU post folding, remains
\textbf{identical}. Due to dual-port memory unit restriction, the
computation by each PPU can only be performed
across multiple cycles\,(\textbf{2} inputs possible per cycle). Hence we also
need to partition the edge set of each node, generally into subsets of \textbf{2}
edges, as depicted in figure \ref{pg_tan}.

By applying perfect access patterns and sequences \cite{karm1} for
inter-unit communication, that are applicable for \textit{all possible} 
point-hyperplane bipartite graphs, the \textit{overlaid} edge partitioning mentioned above can be
\textit{readily achieved}.  Recall that a perfect access pattern stimulates only a
fraction of edges per node in a cycle. Hence we focus our efforts on
evolving the vertex partitioning  only.
For practical designs, to avoid $>$ \textbf{2} concurrent accesses to a
memory unit in a machine cycle, we assume that edge-partitioning has
already been done\,(forming perfect access sequence), and that we are trying
to do a vertex partitioning \textbf{over} each Perfect Access Pattern
within the sequence. Further, in vertex partitioning, as
reasoned earlier, we focus on creating
\textit{balanced, equal-factor} partitions only; refer figure \ref{fold_bp}. However, the
methodology can be extended easily to handle unequal-factor folding of both
sides as well.

\section{Core Folding Scheme}
\label{outline_sec}
In the subsequent text, we assume that associated with each node or
PPU, there is one (distributed) PMU,
using which data can be transferred across the bipartite graph for computation. We have
already mentioned this assumption before, in section \ref{comp_model_sec}.
To recall from \ref{intro_sec}, \uline{\textbf{logical}
processing unit (\textbf{LPU})} is defined as the \textit{logical}
computational unit associated with each node of the graph, while
\uline{\textbf{physical} processing unit (\textbf{PPU})} as the
\textit{physical} computational unit associated with each node of the graph.
The equivalent term for \textit{overlaid} memory unit is
\uline{\textbf{physical} memory unit (\textbf{PMU})}, which is an overlay of
multiple \uline{\textbf{logical} memory units (\textbf{LMUs})}.
Hence in the initial architecture, there are \textbf{J}
LPUs and LMUs of one type, and
another \textbf{J} LPUs and LMUs of another type.
This architecture represents the \uline{\textbf{second} level of
refinement}\footnote{first mandatory, to-be-implemented level of
refinement} of the data
flow graph, and is more detailed in section \ref{sys_arch}. As per
the model of computations to be scheduled on this architecture (section
\ref{comp_model_sec}), LPUs of one type need to read their input data from LMUs of
the other type. The
\underline{core problem} that we tackle first is to prove that using an
equal number of LPUs and LMUs, where the number is
\textbf{any} \textit{factor of} \textbf{J}, and interconnecting them in
\textit{specific way}, the
\textit{necessary} data flow between them in an unfolded PG bipartite graph based
computation can still be achieved \textit{optimally}. We build the design methodology around
this \textbf{main result}.

\subsection{Problem Formulation}
\label{prob_form_sec}

Suppose we fold both sets of nodes by a factor of {\normalsize \textbf{q}} in a PG
balanced bipartite graph.  Hence there are \textbf{J/q} PPUs
and PMUs
of either type. Since overall number of edges in the
non-folded regular bipartite graph is $\mathbf{\gamma\times J}$($\gamma$
defined in section \ref{bipart_fold_sec}), the required size
of each PMU to store all data
corresponding to these
many edges is $\mathbf{q\times\gamma}$. Our unit of computation is a
fold of one row of nodes, each of which has $\gamma$
inputs/outputs. If this fold were to impose \textit{uniform
load/storage requirements} on each of the \textbf{J/q} memories, then
the uniform (storage/communication) load imposed by outputs of \textbf{J/q}
PPUs on \textbf{J/q} PMUs is trivially $\gamma$.

Given that we have \textbf{J/q} PPUs and PMUs physically available, one
question is whether it is possible to generate perfect
\textbf{patterns} using \textbf{J/q} elements of either type (PPUs or PMUs). If this were true, then it will lead to uniform
load ($\gamma$) on the \textbf{J/q} PMUs, since we know that perfect access patterns
impose balanced loads \cite{karm1}. Combining such patterns
will give a perfect access sequence. We discuss some possible
approaches to this question now.

To have a embedded perfect access pattern, one option is that \textbf{J/q} nodes of both
types, and their interconnection becomes a embedded PG sub-geometry in
itself. For that, 
\textbf{J/q} must take a value of
form {\large $\mathbf{\frac{p_{1}^{s_{1} (n+1)} - 1}{p_{1}^{s_{1}} - 1}}$} for some
prime {\normalsize $\mathbf{p}_1$} and non-negative integer {\normalsize $\mathbf{s}_1$}. This
is the cardinality of the set of hyperplanes in some
{\large $\mathbf{P}(n,\mathbb{GF}({p_1}^{s_1}))$}. In such
a case, we would need to study such structure-ability of \textbf{J} for
various values of \textbf{p} (its base prime) and \textbf{q} (its desired
factors).

If this were possible, node connectivity of such embedded geometry, from first principles,
will be {\large $\mathbf{\frac{p_{1}^{s_{1} n} - 1}{p_{1}^{s_{1}} - 1}}$}
\cite{karm1}. However, each node needs all of {\large
$\gamma\,=\,\mathbf{\frac{p^{s n} - 1}{p^{s} - 1}}$}
inputs, where $p^s$ is order of the base Galois field of {\normalsize
\textbf{n}}-dimensional projective space
under consideration, for otherwise, their computation will be incomplete.

As an example, let $\mathbf{p}$ = 3 and $\mathbf{s}$ = 2. Then $\mathbf{J}$
= 91 and {\normalsize $\gamma$} = 10.
Now $\mathbf{q}$ = 7 is a factor of 91. If we fold each row of node 7 times, then
\textbf{J/q} = 13. An order-13 regular bipartite graph is possible when
$\mathbf{p_1}$ = 3 and $\mathbf{s_1}$ = 1. However, by definition, such a smaller graph has
its regular node degree $\gamma^{'}$ = 4, while we need it to be 10 itself.

The solution lies in simply increasing the
LMU size and number of accesses per
LMU.  As one can see, in general for projective spaces over
\textit{non-binary} Galois Fields, $\gamma$ is divisible by 2. When we take
2-access at a time, we can form a perfect access pattern in the \textbf{J/q}-sized
fold of a regular bipartite graph as detailed in theorem \ref{th1}, for
\textbf{ANY} \textbf{q}. We
later easily extend the same pattern generation for graphs derived from
projective spaces based on \textit{binary} Galois Fields.

\subsection{Folding by \textbf{ANY} Factor}
\label{any_fold_sec}

We now generalize our earlier analysis suitably and make the \textit{final}
statement.
\begin{thm}
\label{th1}
It is possible to generate a (folded) perfect access
pattern, from a non-folded perfect access pattern, using \textbf{J/q}
LPUs and LMUs of a fold that belongs to the bipartite
graph based on $\mathbf{P}(n,\mathbb{GF}(p^s))$, for \textbf{ANY}
\textbf{q} that divides \textbf{J}.
\end{thm}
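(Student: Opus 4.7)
The plan is to exploit the circulance of the PG bipartite graph together with a modular overlay of indices. Recall that a non-folded perfect access pattern is parametrized by two shift offsets $j_1, j_2 \in \{0, 1, \ldots, J-1\}$: in one cycle, LPU $i$ accesses LMUs $(i+j_1) \bmod J$ and $(i+j_2) \bmod J$, and because $x \mapsto (x+j_k) \bmod J$ is a bijection for each $k$, every LMU is touched exactly twice (matching the dual-port capacity). I would overlay LPUs onto PPUs, and LMUs onto PMUs, via the quotient map $i \mapsto i \bmod (J/q)$, so that the $q$ LPUs $\{r,\; r+J/q,\; \ldots,\; r+(q-1)(J/q)\}$ collapse onto PPU $r$, and analogously for the memory side.

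The central construction is then a $q$-cycle schedule inside one fold: in cycle $c \in \{0, 1, \ldots, q-1\}$, PPU $r$ executes LPU $r + c\,(J/q) \bmod J$. The key identity is
\[
\bigl(r + c(J/q) + j_k\bigr) \bmod (J/q) \;=\; (r + j_k) \bmod (J/q), \qquad k=1,2,
\]
since $c(J/q) \equiv 0 \pmod{J/q}$. Hence PPU $r$ reaches the \emph{same} pair of PMUs, namely $(r+j_1) \bmod (J/q)$ and $(r+j_2) \bmod (J/q)$, in every one of the $q$ cycles of the fold, even though the underlying LMU (and therefore the stored datum) changes from cycle to cycle. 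This is precisely the edge-overlay property that lets the physical wiring remain static across folds.

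To conclude that the folded pattern is indeed perfect, I would verify in sequence: (a) within each cycle, $r \mapsto (r+j_k) \bmod (J/q)$ is a bijection, so each PMU is touched exactly twice and the dual-port constraint is honored; (b) every PPU is busy in every cycle, giving full LPU utilization; (c) as $c$ ranges over $\{0,\ldots,q-1\}$, each $i \in \{0,\ldots,J-1\}$ appears exactly once, via the unique decomposition $i = r + c(J/q)$ with $r = i \bmod (J/q)$. Consequently the four properties of Karmarkar's template (no read/write conflict, no processor wait, full processor and full bandwidth utilization) descend from the unfolded to the folded pattern. Iterating the same argument across the $\gamma/2$ shift-pairs of the underlying perfect access sequence yields a folded sequence of length $q\gamma/2$, which proves the theorem for every $q \mid J$ when $\gamma$ is even.

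The main subtlety, and the step I expect to flag most carefully, is not any arithmetic gymnastics but recognizing that \emph{vertex} overlay via $\bmod\,(J/q)$ \emph{automatically} induces \emph{edge} overlay, owing to the shift-invariance displayed above. Everything downstream — static interconnect, uniform load of $\gamma$ accesses per PMU per fold, uniformity of PMU-side storage of $q\gamma$ words — is a consequence of this one observation. The binary Galois field case, in which $\gamma$ may be odd and the ``two accesses per cycle'' grouping does not partition the edge set evenly, sits outside this clean construction and should be handled as a separate extension, as the text already anticipates.
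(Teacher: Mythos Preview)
Your proof is correct and follows essentially the same route as the paper's: both exploit circulance to parametrize the unfolded pattern by fixed shift offsets and then observe that reduction modulo $J/q$ preserves the bijection on each port, so the perfect-access conditions descend to the fold. Your presentation is in fact more explicit---you write the key identity $(r+c(J/q)+j_k)\bmod(J/q)=(r+j_k)\bmod(J/q)$ directly, whereas the paper argues it verbally via ``shifts''---and your remark that vertex overlay automatically induces edge overlay is precisely what the paper separates out and proves as Theorem~\ref{th2}.
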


\begin{figure}[h]
\begin{center}
\includegraphics[scale=0.9]{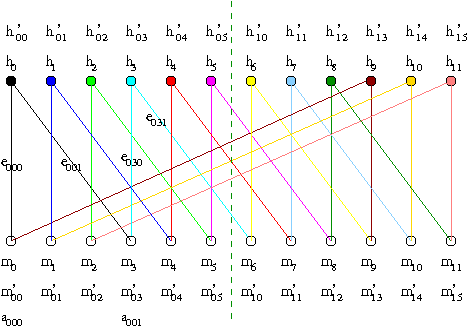}
\end{center}
\caption{Example Circulant Representation of PG Bipartite Graph}
\label{perf_pat}
\end{figure}

\begin{proof}
The two important properties used in this proof are properties of
\textit{modulo addition}, and
circulance of PG-based balanced bipartite graph. As
mentioned earlier, PG-based bipartite graph is a circulant
graph.

For all notations as well as all
\textit{representative} indices that we use hereafter in the paper, we
follow figure \ref{perf_pat}. Let
the unfolded set of computations (hyperplanes) be represented as $\{h_i: 0
\leq i \leq \mathbf{J}\}$. After folding, let the new set of LPUs be represented as $\{h^{'}_{ji}: 0 \leq j \leq \mathbf{q},\, 0 \leq i
\leq \mathbf{J/q}\}$. Similarly, let the unfolded set of storages (points)
be represented as $\{m_i: 0 \leq i \leq \mathbf{J}\}$. After folding, let
the new set of dual-port LMUs be represented as $\{m^{'}_{ji}: 0
\leq j \leq \mathbf{q},\, 0 \leq i \leq \mathbf{J/q}\}$.
Given a \textit{subgraph} which corresponds to
any one \textbf{full} (non-folded) perfect
pattern which has to be vertex-folded, let \textit{some} two edges of
some node marked by $h^{'}_{ji}$ be $e_{ji0}$ and $e_{ji1}$.

\textit{Overall}, $h^{'}_{00}$ being the first node in the $0^{th}$ fold,
assume that it is connected via
$\{e_{000}$, $e_{001}$, $\cdots$, $e_{00(\gamma -1)}\}$ edges to
different LMUs, where $\gamma$ =
{\large $\mathbf{\frac{p^{s n} - 1}{p^{s} - 1}}$}. Let us assume that the regular
bipartite graph has been \textbf{re-labeled} and \textbf{re-arranged}, such that
circulance is in as explicit form as shown in figure
\ref{perf_pat}.
Using \textit{circulance property} of a point/hyperplane in such graph
results in mapping of that point/hyperplane, and all its edges, to one of
its immediate neighbor node on the same side. Let us
denote the \textbf{ends} of first two edges from
hyperplane $h^{'}_{00}$, $a_{000}$ and $a_{001}$.
Without loss of generality, assume hyperplanes
represent the set of computations being done currently, while points
represent the set of LMUs from which input/output to computations
is happening. Indices $a_{000}$ and $a_{001}$ belong to interval
[\textbf{0, J}], and need to be re-mapped to index set of physically available
LMUs, [\textbf{0, J/q-1}]. For this, we take
\textit{remainder} modulo-(\textbf{J/q}) of $a_{000}$ and $a_{001}$,
and denote the new indices by $a^{'}_{000}$ and $a^{'}_{001}$. The two new indices are either equal
or they are not equal. In either case, when we re-index ends of
the two edges of any hyperplane $h^{'}_{0i}$, from points $a_{0i0}$ and
$a_{0i1}$ to points $a^{'}_{0i0}$ and $a^{'}_{0i1}$, then by circulance property, the \textbf{shift} between
$a^{'}_{0i0}$ and $a^{'}_{000}$ (or between $a^{'}_{0i1}$ and $a^{'}_{001}$)
is \textbf{equal to} the shift between $h^{'}_{0i}$ and
$h^{'}_{00}$. After such successive re-indexing \textbf{J/q} times,

\begin{enumerate}
\item The set of hyperplane indices used covers up all the values between \textbf{0} and
$\mathbf{\left(\frac{J}{q}-1\right)}$.
\item By virtue of modulo-$\mathbf{\left(\frac{J}{q}\right)}$ addition by \textbf{1},
$\mathbf{\frac{J}{q}}$ times, the set
of new first point indices covers all the values between \textbf{0} and
$\mathbf{\left(\frac{J}{q}-1\right)}$. Similarly, the set of new second point indices covers all the
values between \textbf{0} and $\mathbf{\left(\frac{J}{q}-1\right)}$ as well.
\end{enumerate}
It is straightforward to check that all \textbf{necessary and sufficient
conditions} for generation of perfect access patterns and sequences
\cite{karm1} get immediately satisfied. Hence we have constructively proven
that such folded perfect access patterns exist for PG bipartite graphs, which by
definition, impose \textbf{perfectly balanced} (communication)
\textit{load} on various modules such as PMUs and PPUs. For certain error-correction computations, especially
such memory efficiency is highly desirable \cite{tarable}.
\end{proof}
\begin{corl}
\label{corl1}
As an important corollary, it is easy to prove that the total number of
PMUs accessed by each PPU, $\rho$, is $\leq$
$\gamma$, as well as $\leq$ \textbf{J/q}.
\end{corl}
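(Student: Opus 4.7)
The plan is to combine the circulance of the PG bipartite graph with the specific shift-by-$(J/q)$ rule that defines the vertex-fold assignment in the theorem, and to show that all $q$ LPUs overlaid on one PPU access the \emph{same} set of PMU indices after reduction modulo $J/q$. Once this is established, both bounds in the corollary follow immediately by counting.

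First, I would make the vertex-partition rule explicit in the notation of the theorem: PPU with within-fold index $i \in \{0,\ldots,J/q-1\}$ overlays the LPUs $h'_{ji}$ for $j=0,\ldots,q-1$, which in the unfolded labelling of figure \ref{perf_pat} correspond to hyperplanes $h_{i+j\cdot(J/q)}$. Second, I would invoke the circulance property already used to prove theorem \ref{th1}: if $h_i$ is incident to points at unfolded indices $\{a_{i0},a_{i1},\ldots,a_{i(\gamma-1)}\}$, then $h_{i+j\cdot(J/q)}$ is incident to the cyclic shift $\{a_{i0}+j\cdot(J/q),\,\ldots,\,a_{i(\gamma-1)}+j\cdot(J/q)\}$, all taken modulo $J$.

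Third comes the key modular step. Reducing each of those unfolded point indices modulo $J/q$ to obtain the PMU index, the shift $j\cdot(J/q)$ collapses to $0$, so every LPU $h'_{ji}$ in the overlay accesses the \textbf{same} set of PMU indices, namely $\{a_{i0}\bmod(J/q),\ldots,a_{i(\gamma-1)}\bmod(J/q)\}$, independent of $j$. The number of distinct elements in this set is at most $\gamma$ (it is the image of a $\gamma$-element set under a reduction map), giving $\rho\leq\gamma$; and it is trivially at most $J/q$ since that is the total number of available PMUs. Taking the minimum gives the stated bound.

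The only mildly tricky point, and the thing I would be careful to get right, is aligning the double-subscript notation $h'_{ji}$ used inside the proof of theorem \ref{th1} with the linear indexing of the original unfolded hyperplanes, so that the circulant shift and the modular reduction line up unambiguously. Everything else is a one-line consequence of $j\cdot(J/q)\equiv 0 \pmod{J/q}$, so the corollary is essentially bookkeeping on top of the machinery already built for theorem \ref{th1}.
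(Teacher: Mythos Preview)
Your argument is correct and is exactly the natural filling-in of details that the paper leaves implicit: the paper offers no separate proof of the corollary beyond the phrase ``it is easy to prove,'' relying on the same circulance-plus-modular-reduction reasoning from theorem~\ref{th1} (and, in effect, anticipating theorem~\ref{th2}) that you spell out. Your observation that $j\cdot(J/q)\equiv 0 \pmod{J/q}$ forces every overlaid LPU to hit the same set of PMU indices, whence $\rho$ is bounded both by the size $\gamma$ of the preimage and by the size $J/q$ of the codomain, is precisely the intended one-line justification.
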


We now also prove one of our earlier claims: that edges get overlaid while
folding a PG-based bipartite graph for \textbf{ANY} factor \textbf{q}.

\begin{thm}
\label{th2}
It is possible to provide a complete one-to-one mapping of between two sets
of edges, belonging to any two folds of a PG bipartite graph, created using \textbf{ANY}
\textbf{q} that divides \textbf{J}. Each edge set of a fold is defined as
the set of all edges that are incident on \textit{any one side}
of nodes of that fold.
\end{thm}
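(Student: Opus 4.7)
The plan is to leverage the circulance property exactly as in the proof of Theorem~\ref{th1}, but now looking across folds rather than within a single fold. After the relabeling and rearrangement that makes circulance explicit, each hyperplane $h_k$ in the unfolded graph is incident on the set of points $\{m_{k+\ell \bmod \mathbf{J}} : \ell \in L\}$, where $L$ is a fixed list of $\gamma$ offsets that depends only on the underlying $\mathbf{P}(n,\mathbb{GF}(p^s))$ and not on $k$. I would define the $j$-th fold to consist of the hyperplanes $\{h^{'}_{ji} := h_{j\cdot(\mathbf{J/q}) + i} : 0 \leq i < \mathbf{J/q}\}$, so that folds are precisely the cosets of the subgroup of size $\mathbf{J/q}$ inside $\mathbb{Z}/\mathbf{J}\mathbb{Z}$.

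Given this partition, I would construct the required one-to-one map edge-by-edge: associate the edge $(h^{'}_{0i}, m_{i+\ell})$ in fold $0$ with the edge $(h^{'}_{ji}, m_{j\cdot(\mathbf{J/q})+i+\ell})$ in fold $j$, for the same $i$ and the same offset $\ell \in L$. This is clearly well-defined on the full edge set of each fold (each hyperplane in fold $j$ has exactly $\gamma$ edges, one per $\ell \in L$), injective, and a bijection between the two edge sets since both have cardinality $\gamma \cdot \mathbf{J/q}$. The physical meaningfulness of this bijection follows from the key observation already used in Theorem~\ref{th1}: since $j\cdot(\mathbf{J/q})\equiv 0 \pmod{\mathbf{J/q}}$, the two paired edges reduce modulo $\mathbf{J/q}$ to the same (PPU-index, PMU-index) pair, so they correspond to the \emph{same physical wire} in the folded architecture, which is precisely what ``edges get overlaid'' means.

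The main steps in order are: (i) formalise the fold partition as cosets of a cyclic subgroup using circulance, (ii) write down the explicit pairing via the offset list $L$, (iii) verify well-definedness (both endpoints still fall in the correct node/memory set), (iv) verify the bijection by a simple counting argument, and (v) check that the reduction modulo $\mathbf{J/q}$ sends paired edges to the same physical (PPU,PMU) pair, invoking the property of modular addition already exploited in Theorem~\ref{th1}.

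The hard part, if any, is step (v): one must confirm that the pairing is consistent \emph{edge by edge} and not merely at the level of endpoints, i.e.\ that a given physical wire between a PPU and a PMU is used at most once per fold per perfect access pattern, so that the overlay is genuinely an overlay rather than a many-to-one collapse. This, however, is immediate from Theorem~\ref{th1}: within each fold, the folded perfect access pattern already assigns distinct (PPU,PMU) pairs to distinct edges, and by construction the reduction modulo $\mathbf{J/q}$ produces identical assignments in every other fold, so the edge-level bijection is forced by the pattern-level coincidence rather than requiring any additional combinatorial argument.
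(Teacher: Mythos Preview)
Your proposal is correct and follows essentially the same approach as the paper: both arguments pair the $k$-th edge of the $i$-th node in one fold with the $k$-th edge of the $i$-th node in another fold, then use circulance to observe that the two remote endpoints differ by a multiple of $\mathbf{J/q}$ and hence reduce to the same PMU index modulo $\mathbf{J/q}$. Your formulation via the offset list $L$ and coset language is slightly more explicit, and your step~(v) adds a consistency check the paper leaves implicit, but the core modular-arithmetic computation is identical.
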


\begin{proof}
Let us consider any two fold indices \textbf{x} and \textbf{y} to prove
overlaying of edges. For each edge $e_{xjk}$, the $\mathbf{k^{th}}$ edge
incident on $\mathbf{j^{th}}$ node of $\mathbf{x^{th}}$ fold, consider
$e_{yjk}$, again $\mathbf{k^{th}}$ edge incident on $\mathbf{j^{th}}$ node
of \textit{different} fold, $\mathbf{y}$. These edges are shift-replicas of
each other in the \textit{unfolded} graph. Let the remote end point of
$e_{xjk}$ is $a_{xjk}$, and that of $e_{yjk}$ be $a_{yjk}$ in the
\textit{unfolded} graph. Then, by virtue of circulance, the remote end
point post-folding of $e_{xjk}$ will be
$\left(a_{xjk}\right)\left(\text{mod-}\mathbf{\frac{J}{q}}\right)$, and
that of $e_{yjk}$ must be
$\left(a_{yjk}\right)\left(\text{mod-}\mathbf{\frac{J}{q}}\right)$ $=$
$\left(a_{xjk} + \left|x-y\right|\cdot\mathbf{\frac{J}{q}}\right)
\left(\text{mod-}\mathbf{\frac{J}{q}}\right)$. This can be simplified to
$\left(a_{yjk}\right)\left(\text{mod-}\mathbf{\frac{J}{q}}\right)$, thus
proving that
$\left(a_{yjk}\right)\left(\text{mod-}\mathbf{\frac{J}{q}}\right)$ $=$
$\left(a_{xjk}\right)\left(\text{mod-}\mathbf{\frac{J}{q}}\right)$ for any
choice of \textbf{x} and \textbf{y}. Since all the $\mathbf{j^{th}}$ nodes
of all folds overlay on each other anyway, such edges which are incident on
these nodes, and also have identical end points post folding, will surely
coincide.
\end{proof}

The above edge overlay is a \textbf{significant property} of this
folding scheme, since it is a \uline{perfect overlay}. That is,
\textbf{each} edge incident on some node of a particular fold,
\uline{uniquely} overlays on some edge of an overlaid node of any other
fold. This advantage simplifies the system design by totally eliminating the
use of switches for connection reconfigurations.

\subsection{Lesser Memory Units}
For some values of $\mathbf{q}$, it is possible that \textbf{J/q} becomes
less than $\gamma$, the degree of each node. This implies that the number
of inputs/outputs per PPU is greater than the number of PMUs. It is straightforward to see the our folding scheme still satisfies
all the prerequisite axioms for generation of perfect access patterns and
sequences, and hence is valid for this case as well.

\section{A Design Methodology Using the Folding Scheme}
\label{methodology_sec}
In this section, we provide a set of algorithms for designing various aspects of intended system,
including memory layout/sizing, communication subsystem design etc., of a folded PG architecture.
This corresponds to \textit{remaining level of refinements}, of the
system model. The output at the end of these refinements is
expected to be the \uline{RTL specification} of the overall
system, which includes cycle-accurate behavior of each component.
Beyond the last level, standard RTL synthesis tools can be integrated
into the design flow for the remaining refinement. This is possible, since
beyond RTL, standard design flows are available, and have to be practically
used. The last subsection summarizes the overall methodology (till RTL stage).

Throughout this chapter, unless stated otherwise, we will consider the PG bipartite graph
made from 3-dimensional projective $\mathbf{P}(3,\mathbb{GF}(2^2))$, as a
\uline{running example}. It has 15 nodes on either side (points
and hyperplanes), and each node is connected to 7 nodes on other side of
the graph. The hyperplane-point incidence is shown in table
\ref{h_p_ex_tab}. Each row of the table lists the points that are
incident on the correspondingly labeled hyperplane. The incidence relations
have been calculated by constructing the Galois extension field, as
outlined and exemplified in appendix \ref{appA}. A corresponding
bipartite graph is shown in Fig. \ref{15_pg_fig}.

\begin{table}[h]
\caption{Point-Hyperplane Correspondence in 3-d Projective Space over
$\mathbb{GF}${(2)}}
\label{h_p_ex_tab}
\centering
{\normalsize
\begin{tabular}[!h]{|c|c|}
\hline
\textbf{Hyperplane no.} & \textbf{List of Points} \\ \hline \hline
0 & \{0, 1, 2, 4, 5, 8, 10\} \\ \hline
1 & \{1, 2, 3, 5, 6, 9, 11\} \\ \hline
2 & \{2, 3, 4, 6, 7, 10, 12\} \\ \hline
3 & \{3, 4, 5, 7, 8, 11, 13\} \\ \hline
4 & \{4, 5, 6, 8, 9, 12, 14\} \\ \hline
5 & \{5, 6, 7, 9, 10, 13, 0\} \\ \hline
6 & \{6, 7, 8, 10, 11, 14, 1\} \\ \hline
7 & \{7, 8, 9, 11, 12, 0, 2\} \\ \hline
8 & \{8, 9, 10, 12, 13, 1, 3\} \\ \hline
9 & \{9, 10, 11, 13, 14, 2, 4\} \\ \hline
10 & \{ 10, 11, 12, 14, 0, 3, 5\} \\ \hline
11 & \{ 11, 12, 13, 0, 1, 4, 6\} \\ \hline
12 & \{ 12, 13, 14, 1, 2, 5, 7\} \\ \hline
13 & \{ 13, 14, 0, 2, 3, 6, 8\} \\ \hline
14 & \{ 14, 0, 1, 3, 4, 7, 9\} \\ \hline

\end{tabular}
}
\end{table}

\begin{figure}[h]
\begin{center}
\includegraphics[scale=0.5]{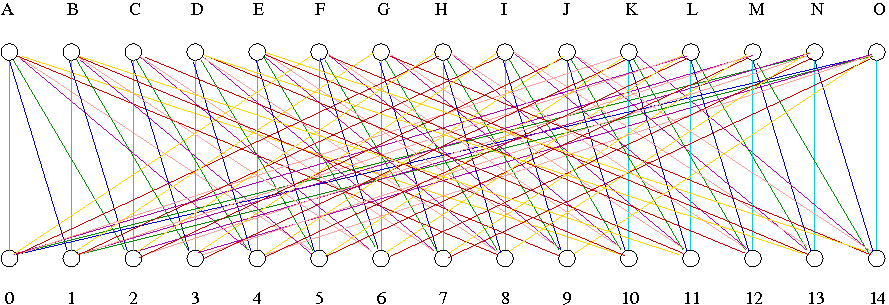}
\end{center}
\caption{Re-labeled (15,15) PG Bipartite Graph}
\label{15_pg_fig}
\end{figure}

To again recall from \ref{intro_sec}, \uline{\textbf{logical}
processing unit (\textbf{LPU})} is defined as the \textit{logical}
computational unit associated with each node of the graph, while
\uline{\textbf{physical} processing unit (\textbf{PPU})} as the
\textit{physical} computational unit associated with each node of the graph.
The equivalent term for \textit{overlaid} memory unit is
\uline{\textbf{physical} memory unit (\textbf{PMU})}, which is an overlay of
multiple \uline{\textbf{logical} memory units (\textbf{LMUs})}.

\subsection{System Architecture and Data Flow}
\label{sys_arch}
As discussed earlier in section \ref{comp_model_sec}, a PG bipartite graph
represents a data flow graph, with each side of the bipartite graph
representing multiple instances of one type of computation. These
two types of component computations happen one after the other in
\textit{flooding scheduling}.  To design such a system, we first
\textbf{refine} the PG bipartite graph into an architecture diagram at the
\uline{\textbf{second} level of refinement}.  At this \textit{computation}
refinement level, we turn the specification into a high-level architecture.
For this, first the value of fold factor, \textbf{q}, is chosen.
Recall that \textit{first level of refinement} is optional. Hence in such
architectural model, there
are two sets of \textbf{J/q} PPUs, and two sets of
\textbf{J/q} PMUs. One set of PMUs
is \textit{collocated} with one set of PPUs, and similarly the
remaining two. One-to-one mapped local channels are added between 2 ports
of each PPU, and the 2 ports of collocated PMU. Thus the
read/write access \textit{between} each $\langle$PPU, PMU$\rangle$ pair is \textit{local}.
Based on requirements imposed by the application, one set of
collocated $\langle$PPU, PMU$\rangle$ pair uniquely
corresponds to a subset of overlaid hyperplane nodes, and similarly the
other set of collocated $\langle$PPU, PMU$\rangle$
pair uniquely corresponds to a subset of overlaid point nodes.
Based on such roles, \textbf{two sets} of connections derived from
\textit{folded} PG bipartite graph, in form of channels, are added between
set of PMUs of one side, set of PPUs of the other side,
for both the sides. A folded architecture, which arises from
such \textit{second level refinement} of PG
bipartite graph, is depicted in figure \ref{fold_arch}.
This model qualifies to be a \textit{transaction-level model}, as
defined in \cite{gajski_tlm_pap}.

The model of each PPU \textit{after this refinement}
is an \textit{untimed} model that describes its internal computation
in some chosen model of computation, \textbf{after
modifications} that relate to overlaying of such units. This model
cannot be a cycle-accurate model, since
specification of that requires the knowledge of sequence in which inputs
arrive. This sequence is dependent on design option chosen as in section
\ref{pat_sec}, something that is part of next level of refinement.
Hence the cycle-level details of this modification are detailed later in
section \ref{micro_mod_sec}, as part of \textit{third level of
refinement}.
Similarly, the model of PMU \textit{after this refinement} is a
partially complete model, which includes a properly-sized RAM and a
placeholder for an address generation component. Details of this component
are filled at \textit{fourth level of refinement}, as per section \ref{add_gen_sec}.
The internal layout of these PMUs is described in section \ref{mem_lout_sec}.

For normal (non-folded) flooding scheduling of such computation, we assume
the \textit{convention} that
first set of PPUs read the required data from PMUs of
the \textit{other} side, utilizing the services of a PG interconnect. They
then write the output data in their \textit{local} PMU. For the
next half of computation, the second set of PPUs now
access the PMUs of the first type via the interconnect, to read in
their data (output by the first set of PPUs). They also write
back their output in their local PMUs, to be later read in by the
first set of PPUs in the next iteration.

\begin{figure}[h]
\begin{center}
\includegraphics[scale=0.72]{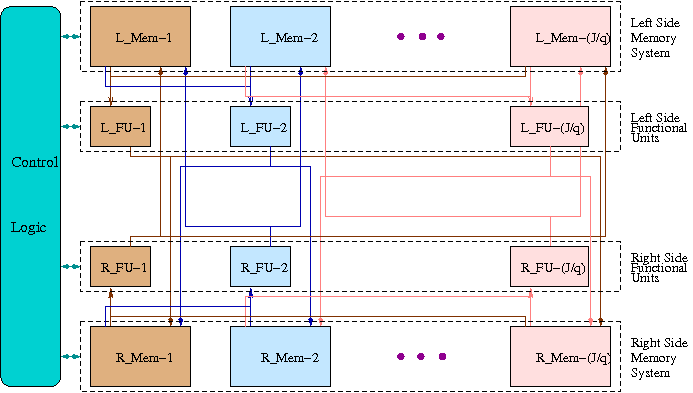}
\end{center}
\caption{High-level Architecture of Folded PG Bipartite Computing System}
\label{fold_arch}
\end{figure}

Such high-level system architecture next needs to be completed with details
of further componentization (e.g., separating address generation unit from
actual storage in PMU), thus taking it to last two refinement levels.
This folding design is explained over next few sections.

\subsubsection{Handling Prime Number of Computational Nodes}
\label{prime_sec}
For some values of \textbf{p} and \textbf{s}, the number of nodes on one
side of bipartite graph, $\mathbf{J}$ = {\large $\mathbf{\frac{p^{s
(n+1)}-1}{p^s-1}}$},
may be a prime number. For such number, no factor exists, based on which
\textit{second level of refinement} can be carried out. To still design for
folding, we proceed as follows. Since this step is not always
needed, a reader may skip this subsection in first reading. We \textit{add} a small
number of \uline{dummy nodes} to the graph towards one end of the graph,
on both sides. The number of additional nodes can be at least
one (in which case, the total number of nodes becomes an even
number). We then \uline{\textit{convert}} the
original circulant bipartite graph into a
\textbf{expanded} \textit{circulant} bipartite
graph, using algorithm \ref{prime_algo} described next. If the
new graph is not kept circulant, then scheduling \textit{across
folds} will entail changing of wiring at runtime, something that is
undesirable. This is because theorem \ref{th1} holds only for circulant
graphs. The remaining steps in the folding design, after this
optional expansion, remain \textbf{identical}.

In the following algorithm, if we add $\alpha$ dummy nodes to the graph,
then we also add at \textit{maximum} $\gamma$ \textit{dummy edges} per
\textit{retained} node. All the edges \textit{retained} from
earlier graph are called \uline{real edges}; and all that are newly added
as per algorithm will be called \uline{dummy edges} hereafter. The essence of the algorithm is to grow a union of $\gamma$
perfect matchings into a union of \textit{at maximum} ($2\cdot\gamma$)
perfect matchings as follows. A perfect access sequence is simply the
\textit{disjoint union} of various perfect matchings in a balanced bipartite
graph; see \cite{fold2_techrep}. Let nodes on one side of the original
graph be denoted as $h_0$, $h_1$, $\cdots$, $h_{\mathbf{J}-1}$, and nodes
on other side as $a_0$, $a_1$, $\cdots$, $a_{\mathbf{J}-1}$. By
abuse of notation, we will use the notation $h_x$ to not only mean a node
label, but also the node index/number ($x$).  Let the end
points of edges incident on \textbf{extremal} node on one
side, $a_{\mathbf{J}-1}$, be \textit{numbered} as \{ $h^{i}_{\mathbf{J}-1}$:
$0\leq\,i\,<\,\gamma$\}, where $h^{i}_{\mathbf{J}-1}$ are indices sorted in
\textit{increasing} order.  For each edge (fixed `\textit{i}') in
this set of edges of extremal node, $\langle a_{\mathbf{J}-1},
h^{i}_{\mathbf{J}-1}\rangle$,
there already \textit{exist} a \textit{shift-replicated} \textit{real edge} $\langle a_0,
(h^{i}_{\mathbf{J}-1} + 1)\mbox{-mod(\textbf{J})}\rangle$, and its further shift
replicas, in the \textbf{original} (unexpanded) graph. However, in general
for various numbers $h^{i}_{\mathbf{J}-1}$, \textbf{J} and
(non-zero) $\alpha$, and fixed `\textit{i}',
\[ (h^{i}_{\mathbf{J}-1} + \alpha +
1)\mbox{-mod(\textbf{J+$\alpha$})}\;\neq \; (h^{i}_{\mathbf{J}-1} + 1)\mbox{-mod(\textbf{J})} \]

\begin{algorithm}[!h]
\caption{Algorithm to `Expand' Order of a Circulant Balanced Bipartite Graph}
\label{prime_algo}
\begin{algorithmic}[1]
\algrenewcommand\algorithmicforall{\textbf{for each}}
   \State Label nodes of \textit{source} graph using sets \{$a_i$:$0\leq i<\mathbf{J}$\}
   and \{$h_i$:$0\leq i<\mathbf{J}$\}
    \State Label the edges of source graph using tuples $\langle a_{i},
    h^{k}_{i}\rangle$:
    $0 \leq i < \mathbf{J}$ and $0 \leq k < \gamma$

    \Statex
    \State Add $\alpha$ new nodes on either side towards making a bigger
    bipartite graph
    \State Label the newly added nodes with \{$a_i$:$\mathbf{J}\leq
    i<\mathbf{J+\alpha}$\} and \{$h_i$:$\mathbf{J}\leq
    i<\mathbf{J+\alpha}$\} respectively
    \State \textbf{Retain} all the edges, as represented by tuple of labels, in the
    bigger graph
    \Statex

    \ForAll{real edge in set $\langle a_{\mathbf{J}-1},
    h^{i}_{\mathbf{J}-1}\rangle$: $0 \leq
    i < \gamma$}
      \While{$1 \leq k < \mathbf{J}+\alpha$}
        \If{$\not\exists$ edge
                $\langle (a_{\mathbf{J}}+k-1)\mbox{-mod(\textbf{J}+$\alpha$)},
                (h^{i}_{\mathbf{J}-1}+k -1)\mbox{-mod(\textbf{J}+$\alpha$)}\rangle$}
          \State Add \textbf{dummy} edge $\langle
          (a_{\mathbf{J}}+k-1)\mbox{-mod(\textbf{J}+$\alpha$)},
        (h_{i,\mathbf{J}-1}+k)\mbox{-mod(\textbf{J}+$\alpha$)}\rangle$
        \EndIf
      \Statex
      \State k $\gets$ k+1
      \EndWhile
    \EndFor

    \Statex

    \ForAll{real edge in set $\langle a_{0}, h^{i}_{0}\rangle$: $0 \leq i < \gamma$}
      \While{$1 \leq k < \mathbf{J}+\alpha$}
        \If{$\not\exists$ edge $\langle a_{k},
        (h^{i}_{0}+k)\mbox{-mod(\textbf{J}+$\alpha$)}\rangle$}
          \State Add \textbf{dummy} edge $\langle a_{k},
          (h^{i}_{0}+k)\mbox{-mod(\textbf{J}+$\alpha$)}\rangle$
        \EndIf
      \Statex
      \State k $\gets$ k+1
      \EndWhile
    \EndFor
\end{algorithmic}
\end{algorithm}

In the above equation, the left hand side tries to coincide a
($\alpha+1)$-times circulantly shifted replica of edge $\langle
a_{\mathbf{J}-1}, h^{i}_{\mathbf{J}-1}\rangle$ in the expanded (bigger) graph, with the
existing edge $\langle a_0, (h^{i}_{\mathbf{J}-1} +
1)\mbox{-mod(\textbf{J})}\rangle$, the right hand side, which is not possible in general. Hence, in the expanded graph, where $\alpha$
dummy nodes have been added on either side of graph, the original, real
edge $\langle a_0, (h^{i}_{\mathbf{J}-1} + 1)\mbox{-mod(\textbf{J})}\rangle$
is \textit{no more a shift replica} of another real edge $\langle
a_{\mathbf{J}-1}, h^{i}_{\mathbf{J}-1}\rangle$. In fact, it \textit{may not
be shift replica} of \textit{any} original edge of
$a_{\mathbf{J}-1}$, $\langle a_{\mathbf{J}-1},
h^{k}_{\mathbf{J}-1}\rangle$.
\begin{equation}
\label{eqn_grow}
\forall k: 0 \leq i,k < \gamma\,:\; (h^{k}_{\mathbf{J}-1} + \alpha +
1)\mbox{-mod(\textbf{J+$\alpha$})}\;\neq \; (h^{i}_{\mathbf{J}-1} +
1)\mbox{-mod(\textbf{J})} 
\end{equation}
The shift-replication does hold in certain cases, in which case the
above equation becomes an equality. Let us define $|h_{\mathbf{J}-1} -
h^{i}_{\mathbf{J}-1}|$ as $d_i$. In the original graph, the real edge
$\langle a_0, (h^{i}_{\mathbf{J}-1} + 1)\mbox{-mod(\textbf{J})}\rangle$ is a
shift-replica of $i^{th}$ edge of $a_{\mathbf{J}-1}$, $\langle
a_{\mathbf{J}-1}, h^{i}_{\mathbf{J}-1}\rangle$. Then,
\textbf{whenever} $((h^{i}_{\mathbf{J}-1} + 1)\mbox{-mod(\textbf{J}}) -
d_i)\mbox{-mod(\textbf{J}}+\alpha)$ = $h^{k}_{\mathbf{J}-1}$ for some
\textbf{k} (may not be \textbf{i}), the former real edge continues to be
shift-replica of some earlier edge. For example, let $h^{i}_{\mathbf{J}-1}$
be equal to $h_{\mathbf{J}-1}$ (\textbf{k} = $\gamma$ - 1). It is
easy to see that $\langle a_0, h_0\rangle$ is still a ($\alpha+1$)-times
shift-replicated copy of $\langle a_{\mathbf{J}-1},
h_{\mathbf{J}-1}\rangle$, in the extended graph.
Otherwise, in general, the equivalence class of edges within a
perfect matching in context of earlier, smaller graph now breaks
down into \textbf{at maximum} two equivalence
classes. One equivalence class now contains the real edge (for fixed
`\textit{i}') $\langle a_{\mathbf{J}-1}, h^{i}_{\mathbf{J}-1}\rangle$, and their
shift-replicas in the bigger graph. The other equivalence class,
if needed, contains
another real edge (again, for fixed `\textit{i}') $\langle a_0, (h^{i}_{\mathbf{J}-1}
+ 1)\mbox{-mod(\textbf{J})}\rangle$, and their shift-replicas in the bigger
graph. Hence each node has upto $2\cdot\gamma$ (dummy+real) edges
incident on them, due to regularity of degree in the graph.

After partitioning each perfect matching, we \textbf{grow} each maximal
matching into a perfect matching of the extended graph by adding dummy
edges, which are shift replica of this class of edges. This leads to a
graph, which is circulant, but its node degree is \textbf{at maximum}
($2\cdot\gamma$). An example usage of such algorithm is depicted in figure
\ref{exp_prime_gph}, and summarized in algorithm \ref{prime_algo}. In this
figure, a order-5 bipartite graph (figure \ref{prime_gph}) is grown into
order-6 bipartite circulant graph. One can see that in the bigger
graph, edge $\langle a_0, ~h_2\rangle$ is \textbf{not} a shift replica of
any earlier existing edges, $\langle a_4,~h_4\rangle$, $\langle
a_4,~h_3\rangle$, $\langle a_4,~h_1\rangle$, as per equation
\ref{eqn_grow}. Hence we grow these edges separately to get two different
extended perfect matchings. While executing line (9) of above algorithm,
we add the shift-replicated edges.
\begin{itemize}
\item Dummy edge $\langle a_5,~h_5\rangle$ as shift replica of real edge $\langle a_4,~h_4\rangle$.
\item Dummy edges $\langle a_5,~h_4\rangle$, $\langle a_0,~h_5\rangle$ as
        shift-replicas of real edge $\langle a_4,~h_3\rangle$.
\item Dummy edges $\langle a_5,~h_2\rangle$, $\langle a_0,~h_3\rangle$,
        $\langle a_1,~h_4\rangle$, $\langle a_2,~h_5\rangle$ as
        shift-replicas of real edge $\langle a_4,~h_1\rangle$.
\end{itemize}
Similarly, while executing line (17) of the algorithm, we add the following
shift-replicated edges.
\begin{itemize}
\item Dummy edges $\langle a_3,~h_5\rangle$, $\langle a_4,~h_0\rangle$,
        $\langle a_5,~h_1\rangle$ as shift-replicas of real edge $\langle
        a_0,~h_2\rangle$.
\item Dummy edges $\langle a_1,~h_5\rangle$, $\langle a_2,~h_0\rangle$,
        $\langle a_3,~h_1\rangle$, $\langle a_4,~h_2\rangle$, $\langle
        a_5,~h_3\rangle$ as shift-replicas of
        real edge $\langle a_0,~h_4\rangle$.
\end{itemize}

\begin{figure}[h]
\begin{center}
\subfloat[Original Circulant
Graph]{\label{prime_gph}\includegraphics[scale=0.48]{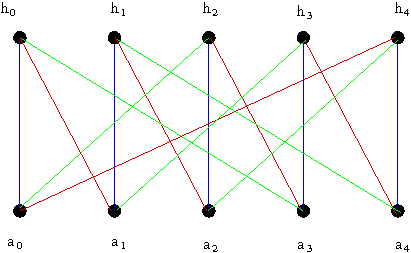}}
\qquad
\subfloat[Expanded Circulant
Graph]{\label{exp_prime_gph}\includegraphics[scale=0.48]{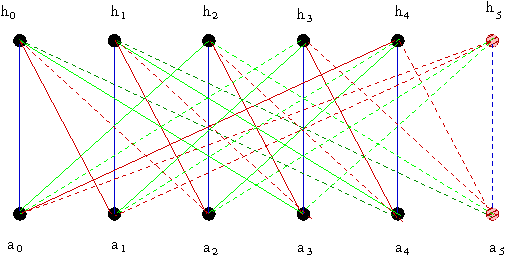}}
\end{center}
\caption{An Example Circulant Graph Expansion}
\label{graph_expand}
\end{figure}

A matrix version of above algorithm is described in \ref{mat_exp_sec}. It is easy to see that the overall graph is circulant with node degree 5,
as expected ($5 < (2\cdot\gamma\,=\,2\cdot 3\,=\,6)$. Also easy to see is that this
algorithm results in a bigger circulant bipartite balanced graph, which has
$\alpha$ additional dummy nodes on either side, and an at maximum
$\gamma$ additional dummy edges per real node.
All the edges added to the additional nodes are considered dummy edges,
since we do not intend to schedule any real computation on the additional
(dummy) node.

We now partition such a circulant graph and schedule the folding
in the standard way, as described in this paper. Whenever some dummy edges
incident on \textit{any} node are scheduled for input/output, they result
in dummy (no read/write) event. Theorem \ref{th1} holds, and the connection
remains static across folds, thus saving all the interconnect
reconfiguration time. This trades off with increase in the span of
the schedule, which is governed by the number of perfect access patterns
within the perfect sequence. In worst case, the number of perfect access
patterns, governed by ({\large $\lceil \frac{\gamma}{2}\rceil$}), grows by a factor
\textit{upto} 2. However, since we expect only small number of dummy nodes to
be added, the porosity of such schedule (no transmission/reception of data
on some edges in a particular machine cycle) will be less.
One can immediately see that \textbf{only} when
last fold is scheduled for computation, some of the PPUs are
idle during entire computation cycle of this fold.
Also, in the same fold, few
PMUs do not have any i/o scheduled at some of its ports,
in particular cycles. Hence some of the \textbf{full} (unfolded)
perfect access patterns are unbalanced in the last fold. For higher folding
factors \textbf{q}, such small imbalance is an acceptable part of our
design methodology.

\subsubsection{Expanding a Circulant Matrix}
\label{mat_exp_sec}
A circulant bipartite graph can also be represented in matrix from, via the
adjacency relation. The node indices of either side of bipartite graph form
the row and column indices of the matrix, respectively. If an edge exists
between two nodes, a 1 is present in corresponding place in the
matrix (0 otherwise). A $7\times 7$ circulant matrix representation of
bipartite graph of figure \ref{pg_bbg}, is shown in figure
\ref{fano_mat}.

\begin{figure}[h]
\begin{center}
\subfloat[Original Circulant
Matrix]{\label{fano_mat}\includegraphics[scale=0.88]{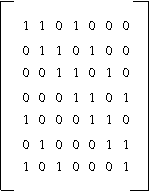}}
\qquad
\subfloat[Expanded Non-circulant
Matrix]{\label{fano_nonc_mat}\includegraphics[scale=0.79]{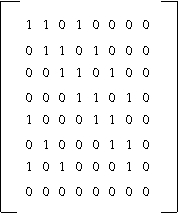}}
\qquad
\subfloat[Expanded Circulant
Matrix]{\label{fano_c_mat}\includegraphics[scale=0.79]{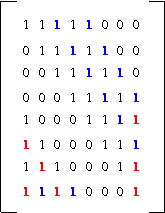}}
\end{center}
\caption{Adjacency Matrix of $7\times 7$ Geometry}
\label{appB_1_fig}
\end{figure}
One can see that in this matrix, if there is a `1' in position $\langle
i,j\rangle$, then there is a `1' again in position $\langle (i+1) \text{mod
7}, (j+1) \text{mod 7}\rangle$ (\uline{circulance property}). If we add a
row and a column having all `0's (equivalent of expanding the graph by
$\alpha$ = 1), the above property is no more valid; see figure
\ref{fano_nonc_mat}. Hence we need to \textit{overwrite} some `0's with
`1's in certain places, so that the above property holds again.

From figure figure \ref{fano_nonc_mat}, we see two sets of locations where
the circulance property is violated. For each `1' in last column of
original matrix ($a_{i,6}$ = 1), we find that certain $a_{(i+k)\text{-(mod
7)},(6+k)\text{-(mod 7)}}$ for $0 < k < 7-i$ are all `0'. We change such
`0's to `1's, as shown in red font in figure \ref{fano_c_mat}. Similarly,
for each `1' in first column of original matrix($a_{i,0}$ = 1), we find
that certain $a_{(i-k)\text{-(mod 7)},(7-k)\text{-(mod 7)}}$ for $0 < k \leq
i+1$ are all `0'. We change such `0's to `1's, as shown in blue font in
figure \ref{fano_c_mat}. This way, we complete all the principal and
non-principal diagonals having all values of `1'. It is easy to show that
this algorithm corresponds step-by-step to algorithm \ref{prime_algo}.

\subsection{Detailing Communication Architecture}
\label{comm_detail_sec}
At the next, \uline{\textbf{third} level of refinement}, we refine the
communication subsystem in the high-level architecture evolved in the
previous refinement. For this purpose, we \textit{expand} each edge in
Figure \ref{fold_arch}, and introduce two sets of 2-to-$\hat\rho$, and
$\hat\rho$-to-2 switches, and appropriate wiring between them. The
value of $\hat\rho$ is typically $\rho$ (see corollary \ref{corl1} for
definition of $\rho$).
Design details of these switches is discussed in
section \ref{switch_sec}. The wiring is \textit{governed}
by the generation of folded perfect access sequence generation, discussed
in section \ref{pat_sec}. The exact implementation of wiring can be guided by
details in section \ref{wire_sec}. At this level, the structural model of
the intended system is complete, and models for many intervals in its
overall \textit{cycle-accurate} behavior are also available.
This makes the system model at this level
\textit{approximately-timed}, as defined in \cite{gajski_tlm_pap}. The next
\textit{(fourth)
level of refinement} details and integrates such intervals, and completes
the entire \textit{cycle-accurate} schedule, and emitting the RTL model
thereafter.

The top level of complete
structure of the system is shown in figure \ref{full_arch_fig}. To
\textit{avoid congestion} in the diagram, the figure shows \textbf{only one
of the two} instances of the global, PG-based interconnect between
one of the two paired,
complementary sets \footnote{The set of 2-to-$\hat\rho$
switches on one side, and the set of
$\hat\rho$-to-2 switches on other side form a pair}
of these switches. This diagram is evolved for the
example system having 30 nodes, which was introduced as a running example
for entire section \ref{methodology_sec}, and for the fold factor discussed
in subsection \ref{abs_sched_sec}. The set of (5) edges having the
\textit{same color} reflect the fact that they are used in communication in
a \textit{synchronous way}. That is, in certain cycles, each of all the
edges/wires of a particular color (e.g., yellow), between two specific
ports of a pair of complementary switches carry data signals. The specific
connection details (which ports, which switches) are discussed in section
\ref{pat_sec}.

\begin{figure}[h]
\begin{center}
\input{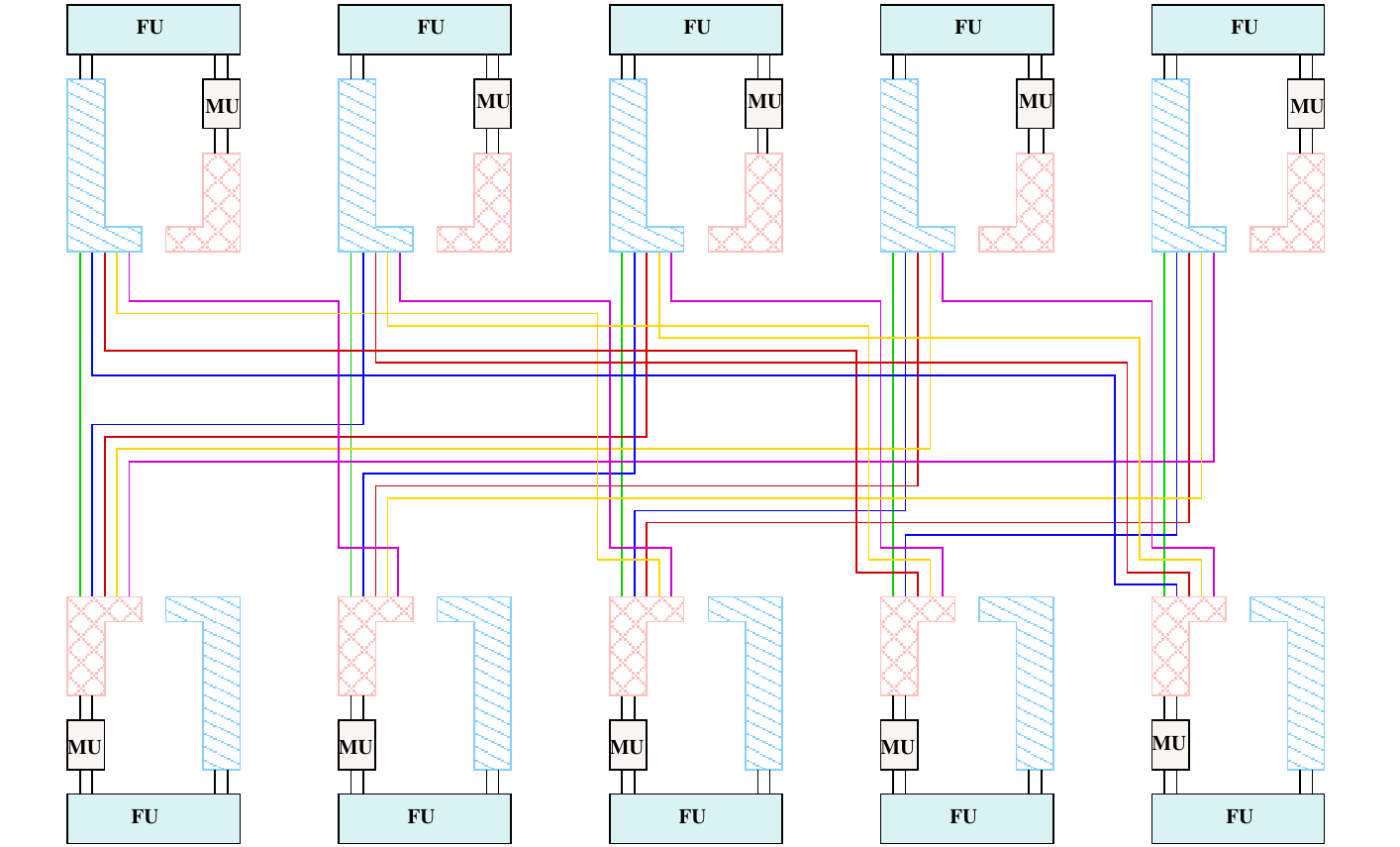_t}
\end{center}
\caption{Top-level Completed Structure of Folded Systems with PG-based
Architectures}
\label{full_arch_fig}
\end{figure}

\subsubsection{The Structure of Switches}
\label{switch_sec}
2-to-$\hat{\rho}$ switches are used to interface the two
transmitting/output ports of each PMU, and the $\gamma$
\textit{possible} recipient/input ports of $\rho$ PPUs; see
corollary \ref{corl1}. Similarly, $\hat{\rho}$-to-2 switches are used to
interface the two receiving/input ports of each PPU, and the
$\gamma$ \textit{possible} transmitting/output ports of $\rho$ PMUs.
There are \textbf{two sets} of such 2-to-$\hat{\rho}$ and $\hat{\rho}$-to-2
switches, since there are two sets of PPUs/PMUs in the
high-level architecture. Regrouping these sets, there are
\textbf{two} paired, complementary sets of switches, where each paired set
consists of one out of two sets of 2-to-$\hat\rho$ switches
belonging to one side, and one out of two sets of $\hat\rho$-to-2
switches belonging to other side of the bipartite graph. Each such paired,
complementary set of switches is interconnected using
an instance of folded PG-based interconnect, as per section \ref{wire_sec}.
The selection bits for all of each type of switch, in each of the two sets,
in every relevant cycle, are synchronized and governed by calculations in
sections \ref{edge_map_sec} and \ref{mem_indx_sec}.

Mostly $\hat\rho$ is equal to
$\rho$ ($\hat\rho$ = $\rho$), but sometimes $\hat\rho$ $>$ $\rho$.
For details, the reader can skip to section \ref{wire_sec}.
In brief, for \textit{each perfect access pattern} whose
folding results in two node indices getting
re-mapped to same overlaid index, $a^{'}_{ij0}$ = $a^{'}_{ij1}$ as per section
\ref{wire_sec}, \textbf{one} \textit{additional} input/output port gets added to each switch
within the paired, complementary set of
switches to which the perfect pattern belongs. This
tantamounts to $\hat\rho$ = $\rho$ + $\theta$, where $\theta$ is the number
of perfect patterns for which $a^{'}_{ij0}$ = $a^{'}_{ij1}$. Each
perfect access pattern implies concurrent communication of two signals.
The additional port \textit{per such pattern} is needed in the
above case because two, rather than one, wires are needed to
\textbf{concurrently support} communication of two input signals between
every pair of matched 2-to-$\hat\rho$ and
$\hat\rho$-to-2 switch corresponding to the folded perfect access pattern;
again see section \ref{wire_sec}.

As pointed out in section \ref{sys_arch}, one type of PPUs are mapped to hyperplanes, and other type to points of a PG bipartite
graph. Correspondingly, when data is being read from PMUs
collocated with one type of PPUs, by the other type of
PPUs, then the 2-to-$\hat{\rho}$ switch, locally placed with PMUs, automatically assume the role of the PMU itself (point or
hyperplane). Similarly, $\hat{\rho}$-to-2 switch, locally placed with
PPUs, automatically assume the role of the PPU
itself (hyperplane or point).

Each switch can be implemented by putting its
port selection schedule in a
LUT, and driving a multiplexer/ demultiplexer from this LUT in appropriate
cycles. The schedule of \textbf{one} switch can be put in \textbf{one LUT},
and schedule of \textit{all other} switches of same type in the same set
can be derived using circulance property discussed in section
\ref{edge_map_sec}. The detailed scheduling of switches is discussed as
part of next level of refinement, in section \ref{comp_sched_sec}.

\subsubsection{Folded Perfect Access Sequence Generation}
\label{pat_sec}
The generation of \textit{folded perfect access
sequence} is one of the
\uline{most important step} towards defining the overall schedule for
system execution. This step leads to \textit{creation}, rather
than refinement, of a model of \textit{control flow} at the \textit{third
level of refinement}, since the required controls of datapath elements are
absent from system model so far. Thus, this model provides an abstract
view of communication scheduling. Generation of schedule is governed by
the details of the proof of theorem \ref{th1}, which in turn deals
with folding of a perfect access sequence. The model also provides
inputs about wiring: which 2-to-$\hat\rho$ switch to be wired to which
$\hat\rho$-to-2 switch, and between which two ports of such two switches.
These details will be brought out in later sections. From our design
experience, \textit{this abstract schedule is the most important input to the overall design
process}.

By using folded perfect access sequences, we can perform parallel
computation of individual nodes (PPUs) on one side of graph, in a multi-cycle
synchronous fashion as follows. As per our assumption about nature of
computation in section \ref{comp_model_sec}, we assume that the node computations
use only one occurrence of each input signal.

Whenever $\mathbf{p}$ is odd, then number of input/output
per computation, {\large $\mathbf{\gamma\,=\,\frac{p^{s n} - 1}{p^{s} - 1}}$},
is divisible by 2. Else, when \textbf{p} = 2, we add a \textbf{dummy edge} to each
node of one side in a \textit{circulant way}, with the edge ending in
\textit{any} node on the
other side. When physically scheduled, the communication over this edge, a
dummy read/write, results in \uline{no transaction}. Hence adding any
scheduling of such edges at various points of time in a \textit{balanced}
schedule leads to a \textit{balanced} schedule only. Physically,
we propose that individual nodes are designed to ignore such dummy input
value available at one of their ports, in the appropriate cycle, to avoid
miscomputation. After such addition, the new number of
input/output per computation is now divisible by 2. By taking, for example,
two inputs at a time for computation, we can periodically schedule a binary
operation on each PPU, in every few cycles (a sequential
computation may take more than one cycle). The set of two
edges representing the i/o for each node's current computation are chosen
so that the edge-pairs are \textit{shift
replicas} of one-another; see figure \ref{perf_pat}. In
\cite{karm1}, Karmarkar showed
that such 2-at-a-time processing indeed leads to perfect access pattern
generation. By folding the number of nodes,
and scheduling as per theorem \ref{th1}, we get \textbf{folded}
perfect access patterns for the folded architecture as well. Any
sequence of such folded perfect access patterns qualifies to be a folded
perfect access sequence. The algorithm for generation of
folded sequence is summarized in algorithm \ref{alg1}.

There is thus a three-level symmetry in computation scheduling
that we evolve. While exciting 2 inputs at a time, each group of \textbf{J/q}
PPUs belonging to one fold shows memory access
balance within a single cycle. Across
\textbf{q} such cycles, all the \textbf{q} groups show balance. These
balanced patterns from these \textbf{q} cycles combine to form a perfect
pattern, when combined \textit{temporally}. Finally, all such (combined) perfect
patterns should form a balanced perfect sequence. The execution of
perfect sequence, thus, takes multiple cycles.

\begin{algorithm}[h]
        \caption{Folded Perfect Access Pattern Generation}
\label{alg1}
\begin{algorithmic}[0]
\If{ $\gamma$ is odd}
    \State add an arbitrary dummy edge to each node, in a
    \textit{circulant} fashion
\EndIf
\Statex
\While{$\exists$ 2 more edges per node on one side of unfolded graph}
    \ForAll{ $0 \leq i < q$ folds of graph}
      \ForAll{ node $h_{ij}$: $j^{th}$ node on one side in $i^{th}$ fold of
graph}
        \State Select 2 so-far \textit{unselected} edges of $h_{ij}$, related to previous
    considered node in a \textit{circulant} fashion, \\
    ~~~~~~~~~~~~~~~~~~~~$e_{ijk}$ and $e_{ijl}$ \\
            \Comment The selection depends on order of inputs as required by node computations
        \State Calculate their new end points as follows\\
    ~~~~~~~~~~~~~~~~ $a^{'}_{ijk}$ = $a_{ijk}$ mod-(\textbf{J/q}) \\
    ~~~~~~~~~~~~~~~~ $a^{'}_{ijl}$ =
    $a_{ijl}$ mod-(\textbf{J/q})
      \EndFor
      \Statex
      \State Perfect Access Pattern = \{ $\langle$
      $h_{ij}$ mod \textbf{J/q}, \{$a^{'}_{ijk}$, $a^{'}_{ijl}$
\} $\rangle$, $\ldots$ \} $\forall$ $0 \leq j < \mathbf{J/q}$, $0 \leq
k,l < \mathbf{\gamma}$
    \EndFor
    \Statex
    \State \textbf{Full} Perfect Access Pattern = Sequence of above perfect
    patterns $\forall$ $0 \leq i < \mathbf{q}$
\EndWhile
\Statex
\State Perfect Sequence = Sequence of above \textbf{Full} Perfect Access Patterns
\end{algorithmic}
\end{algorithm}

An \textit{important} \textbf{2-way design option} for folded architectures
is as follows. There are two ways by which we can combine the 2-input 
computations done by nodes of a fold. We may \textit{first} schedule
2-input computations to be done by each of the
\textbf{J} nodes across \textbf{all} the \textbf{q}
folds sequentially, and \textit{then} we combine partial
all such partial schedules into \textbf{full}/unfolded perfect access patterns.
Alternatively, we may \textit{first} sequentially schedule all
$\gamma/2$ 2-input computations done by each of
the \textbf{J/q} nodes in one fold only, and \textit{then}
repeat this schedule
for all remaining \textbf{(q-1)} folds, and finally combine such
patterns. The choice of
this is left to the implementer. For deciding schedules of various
components, we will \textit{use first design option hereafter}, unless stated
otherwise.

\subsubsection{Example Folding and Abstract Schedule Generation}
\label{abs_sched_sec}
Any sequence of perfect access patterns computed in section \ref{pat_sec}
gives rise to an \textit{abstract} version of computation and communication
schedule. We describe this abstract schedule by folding the example graph
of table \ref{h_p_ex_tab}.

For that graph, we can fold the 15 nodes on each side by a factor of 3, so
that each fold/partition has 5 nodes of either type. Running the algorithm
\ref{alg1}, we get the schedule as in table \ref{fold_ex_tab}. The 15
LPUs are been referred as PUs, 5 physically used PPUs as
PUs, and 5 physically used
PMUs as MUs. A dummy MU is used as a placeholder in last perfect access pattern
for the no memory transaction that is to be scheduled on 2nd port of a
PU.

\begin{table}[!h]
\caption{An Example Folding Schedule. \textbf{D} implies Dummy Edge}
\label{fold_ex_tab}
\centering
{\scriptsize
\begin{tabular}[!h]{|p{0.4cm}|p{2cm}|p{2cm}|p{2cm}|p{2cm}|p{2cm}|p{2.3cm}|}
\hline
Cycle \# & \multicolumn{6}{|c|}{Folded Pattern} \\ \hline \hline 
\multicolumn{7}{|c|}{\textbf{Full} Perfect Access Pattern 0} \\ \hline \hline 
 
0 & [PU0 : MU0, MU1 ] & [PU1 : MU1, MU2 ] & [PU2 : MU2, MU3 ] & [PU3 : MU3, MU4 ] & [PU4 : MU4, MU0 ]  & Scheduling $0^{th}$, $1^{st}$ edge of {0,1,2,3,4} PUs \\ \hline \hline 
1 & [PU0 : MU0, MU1 ] & [PU1 : MU1, MU2 ] & [PU2 : MU2, MU3 ] & [PU3 : MU3, MU4 ] & [PU4 : MU4, MU0 ]  & Scheduling $0^{th}$, $1^{st}$ edge of {5,6,7,8,9} PUs \\ \hline 
2 & [PU0 : MU0, MU1 ] & [PU1 : MU1, MU2 ] & [PU2 : MU2, MU3 ] & [PU3 : MU3, MU4 ] & [PU4 : MU4, MU0 ]  & Scheduling $0^{th}$, $1^{st}$ edge of {10,11,12,13,14} PUs \\ \hline \hline 
\multicolumn{7}{|c|}{\textbf{Full} Perfect Access Pattern 1} \\ \hline \hline 
 
3 & [PU0 : MU2, MU4 ] & [PU1 : MU3, MU0 ] & [PU2 : MU4, MU1 ] & [PU3 : MU0, MU2 ] & [PU4 : MU1, MU3 ]  & Scheduling $2^{nd}$, $3^{rd}$ edge of {0,1,2,3,4} PUs \\ \hline 
4 & [PU0 : MU2, MU4 ] & [PU1 : MU3, MU0 ] & [PU2 : MU4, MU1 ] & [PU3 : MU0, MU2 ] & [PU4 : MU1, MU3 ]  & Scheduling $2^{nd}$, $3^{rd}$ edge of {5,6,7,8,9} PUs \\ \hline 
5 & [PU0 : MU2, MU4 ] & [PU1 : MU3, MU0 ] & [PU2 : MU4, MU1 ] & [PU3 : MU0, MU2 ] & [PU4 : MU1, MU3 ]  & Scheduling $2^{nd}$, $3^{rd}$ edge of {10,11,12,13,14} PUs \\ \hline \hline 
\multicolumn{7}{|c|}{\textbf{Full} Perfect Access Pattern 2} \\ \hline \hline 
 
6 & [PU0 : MU0, MU3 ] & [PU1 : MU1, MU4 ] & [PU2 : MU2, MU0 ] & [PU3 : MU3, MU1 ] & [PU4 : MU4, MU2 ]  & Scheduling $4^{th}$, $5^{th}$ edge of {0,1,2,3,4} PUs \\ \hline 
7 & [PU0 : MU0, MU3 ] & [PU1 : MU1, MU4 ] & [PU2 : MU2, MU0 ] & [PU3 : MU3, MU1 ] & [PU4 : MU4, MU2 ]  & Scheduling $4^{th}$, $5^{th}$ edge of {5,6,7,8,9} PUs \\ \hline 
8 & [PU0 : MU0, MU3 ] & [PU1 : MU1, MU4 ] & [PU2 : MU2, MU0 ] & [PU3 : MU3, MU1 ] & [PU4 : MU4, MU2 ]  & Scheduling $4^{th}$, $5^{th}$ edge of {10,11,12,13,14} PUs \\ \hline \hline 
 
\multicolumn{7}{|c|}{\textbf{Full} Perfect Access Pattern 3} \\ \hline \hline 
9 & [PU0 : MU0, D ] & [PU1 : MU1, D ] & [PU2 : MU2, D ] & [PU3 : MU3, D ] & [PU4 : MU4, D ] &  Scheduling $6^{th}$ edge of {0,1,2,3,4} PUs \\ \hline 
10 & [PU0 : MU0, D ] & [PU1 : MU1, D ] & [PU2 : MU2, D ] & [PU3 : MU3, D ] & [PU4 : MU4, D ] & Scheduling $6^{th}$ edge of {5,6,7,8,9} PUs \\ \hline 
11 & [PU0 : MU0, D ] & [PU1 : MU1, D ] & [PU2 : MU2, D ] & [PU3 : MU3, D ] & [PU4 : MU4, D ] & Scheduling $6^{th}$ edge of {10,11,12,13,14} PUs \\ \hline

\end{tabular}
}
\end{table}

The schedule of PUs in each fold per clock cycle can be easily seen to be
balanced. Put together, they first form a \textbf{full} perfect access pattern every
3 cycles, and then perfect access sequence in 12 cycles.

\subsubsection{Wiring the Interconnect}
\label{wire_sec}
As mentioned earlier, wiring is assumed to be \textbf{direct} in our
case. By theorem \ref{th2}, it is possible to fold in such a way that
certain (overlaid) nodes always access \textbf{same set} of $\rho$ out of
\textbf{J/q} PMUs. Hence the \textbf{connections remain static},
as the computation schedule moves from one fold to another. This is one of
the \textbf{most significant advantages} of folded PG bipartite graphs.
Each wire connects one port of a 2-to-$\hat{\rho}$
switch, and one port of a $\hat{\rho}$-to-2 switch, as already discussed in section \ref{switch_sec}. This
static-ness is easily illustrated using the example folding shown in table
\ref{fold_ex_tab}, by picking any column and each set of 3 continuous
rows under some \textbf{full} perfect access pattern.

Referring to section \ref{prob_form_sec}, if the end points of two
connections of a particular node being considered in a particular
cycle, in a folded graph are
equal (e.g. $a^{'}_{000}$ = $a^{'}_{001}$), the number of wires to each
PMU from each reachable PPU become double. It requires
double channel width, which trades off with decrease in the switch size.
Also, wiring two interconnects between same pair of source and
destination nodes may possibly lead to subsequent
wiring/routing congestion at later design flow stages. One can then
alternatively try to design for another folding factor. Since our
methodology accepts any \textbf{q} that is a factor of \textbf{J}, we
can vary \textbf{q} and may get
a design for which $a^{'}_{000}$ $\neq$ $a^{'}_{001}$.

\subsubsection{Relating Communication Refinement to Modification in
Microarchitecture of PPUs}
\label{micro_mod_sec}
The \textit{fundamental} problem of \textbf{overlaying of datapath
elements} needs to be handled in all possible folding designs.
This design step naturally fits in the \textit{second level of refinement}, which
deals with computational refinement. Hence it has been handled via
creation of the untimed model. However, timing of this model depends on order of input arrival,
i.e. the choice of a design option discussed in section \ref{pat_sec}. Hence this part of micro-architecture evolution is
made part of \textit{third level of refinement}.

Especially in case of operators, within PPUs, that \textit{consult all
input data} to a node('s computation), some changes are needed to
save state, including the intermediate results. For example, let each node's
computation have an accumulation(/max/min) operator present
within. In the schedule of first folding design option, accumulation is only
done partially for each node that is overlaid on the PPU,
across multiple folds during one run of a perfect access pattern per fold.
The current
partial sum needs to be stored separately for each fold, since in the next
run of perfect access pattern in the sequence for
the same fold multiple cycles later, this partial sum needs to
be carried over. Hence per PPU, $q$ copies of each register
holding such intermediate result need to be created.

In the second design option, any register along the
datapath of PPU, whose contents are read and used later
on after multiple cycles, needs to  again have $q$
copies each. This is because in this interval, overlay of such register
would have happened. Of course, switches to select the right register
copy in a particular cycle, driven by the fold index
currently in operation, also need to be inserted in the datapath for this
design option.

\subsection{Issues in Overall Scheduling and Design Completion}
\label{issues_sec}
The control path of a synchronous VLSI system is implemented using a
cycle-level schedule. All aspects of
folding being dealt in the current section \ref{methodology_sec} pertain to
folding the data path of a suitable system, by doing stepwise
refinement of the corresponding DFG. The control path can be 
evolved \textit{alongside}, from the original schedule of an unfolded VLSI system. In the
schedule of such system, there will be \textbf{intervals}, in which
datapath elements will be re-used. By
interval, we imply some contiguous sequence of machine cycles. Such intervals need to be
\textit{expanded by a factor}, along with insertion of \textit{new control
signals} which define e.g. the fold index currently in operation.
Expanding generally implies replicating an interval in which a
certain control signal is TRUE, \textbf{q} times in a contiguous way.
Memory access interval, node computation interval, switch
enable intervals etc. all need to be expanded by a factor. It is possible to
identify and enlist such intervals at RTL level model of the datapath.
Automating the generation of new, expanded schedule using this list,
especially when control path is implemented using microcode sequencing, is
straightforward.

However, some of these expansions can be best worked out from
scratch, rather than working with an interval of schedule for the unfolded
system. This is because in some places, rather than interval of one signal,
interval of a set of related signals gets expanded by factor
\textbf{q}. Further, in such groups, the order in which signals were
earlier turned TRUE gets \textbf{rearranged}. For example, group of switch
selection signals show this characteristic due to folding. Hence it was pointed out earlier that
after the \textit{third level of refinement}, intervals in the
\textit{cycle-accurate} behavior of the intended system, some reflecting
folding and others not reflecting folding, are also available. For such
intervals, the schedule generator must focus on inserting/replacing
appropriate schedule intervals, rather than expanding. To generate
such replacement intervals, the schedule derived in section
\ref{abs_sched_sec} is used as \textbf{base} schedule to derive individual
schedules (cycle-accurate behaviors). To summarize, it is the \uline{\textbf{fourth} level of refinement}
that expands/inserts and integrates these intervals, completing
the implementation of entire
control path of the system via a \textit{cycle-accurate}
schedule (system behavior), and emitting the RTL model thereafter.

Though this schedule governs
the behavior of individual components, certain \textbf{auxiliary details} such as
selection order of ports of
some switches, which is needed for schedule derivation,
also need to be now specified. We cover all these detailed
auxiliary issues, and the overall schedule derivation, in
remaining part of section \ref{methodology_sec}. Before going into
details, we first summarize all the remaining issues that need to be
tackled. Generating details corresponding to solution of these issues
is the other concern of the \textit{fourth level of refinement}.

A schedule for the parallel computational model discussed in section
\ref{comp_model_sec} needs to address issues in two identical computation
phases, due to flooding nature of the computation algorithm. 
Correspondingly, as shown in section \ref{sys_arch}, there is a pair of
$\langle$\textit{PPU, PMU}$\rangle$ relations. One
relation relates
PPUs of left side of bipartite graph to PMUs on right
side of bipartite graph, from which they read the input data in parallel.
Similarly, the other relation relates PPUs of right side of
bipartite graph to PMUs on left side of bipartite graph, from
which they read the input data in parallel. The two reading phases, though
identical, are \textbf{disjoint}. Hence we can simply solve the issues in
communication schedule derivation for one relation only, and apply the
answers to the other.

We identify the following issues in generating the communication schedule.

\subsubsection{Issues for Physical Processing Units}
\label{fissues_sec}
For a \textbf{full} (non-folded) perfect access pattern, after folding, we note the
following issues.
\begin{enumerate}
\item Each LPU, when scheduled over an
overlaid PPU, reads two data items from two of its edges in a
particular machine cycle. How to know which two edges are being active?
\item The $i^{th}$ one out of (\textbf{J/q}) PPUs of $k^{th}$ fold
accesses \textit{one or both} its data in $p^{th}$ PMU for the $l^{th}$
perfect access pattern (see theorem \ref{th1}). How to get the value of \textbf{p}?
\item How to decide whether one or both the data are going to be
stored/read in the same PMU?
\item Given the index of PMU, from which
locations will one/both of the data items be read during $l^{th}$
\textbf{full} perfect access pattern?
\setcounter{saveissueFi}{\theenumi}
\end{enumerate}
The last issue actually pertains to address generation for the read data.
Hence we address this issue as part of the issues in PMU
scheduling itself, in the next section.

Since after computation, PPUs write the result in their local
memory, there are no folding-related issues in
write-back. This is data is to later read by PPUs
of the opposite side, using the edge/connection that connects the PPU and
the PMU. Two issues for a PPU, while writing back data corresponding to an
edge, are:
\begin{enumerate}
\setcounter{enumi}{\thesaveissueFi}
\item After computation, at which location of local memory must each
PPU write the data corresponding to an edge?
\item At each location, in which machine cycle must each PPU
      write the corresponding data?
\end{enumerate}

\subsubsection{Issues for Physical Memory Units}
\label{missues_sec}
The PMUs are also
involved in distributing read data in parallel to various
PPUs. The reading of data is in bursts, and it happens in
certain successive cycles that make up the entire perfect access sequence.
Correspondingly, read addresses need to be generated somewhere in the
system, which are used by PMUs to provide data in various machine
cycles.

For a \textbf{full} (non-folded) perfect access pattern, after folding, we note the
following issues.
\begin{enumerate}
        \item To which PPUs must a PMU send out data?
\setcounter{saveissueMi}{\theenumi}
\\
This question is a dual question of $2^{nd}$
issue for PPUs, and can be easily solved for by
inverting the map generated for that problem. Hence we leave out
reporting detailed solution to this issue.
\end{enumerate}
\begin{enumerate}
\setcounter{enumi}{\thesaveissueMi}
\item In a given cycle, a PMU must send out
data from which location, to which PPU? \\
Because this issue is dealt by generating
corresponding address, we \textbf{transform} this question into following
\textit{address generation issue}. If the PPU $h^{m0}$ working
on some binary operation (read-)accesses the $\mathbf{m}^{th}$ PMU, then in which
cycle does it access it, and at which location (local address)? Here,
$h^{m0}$ is defined as the node of the unfolded graph, whose location
on one side of the bipartite graph is
\textbf{extremal} w.r.t. other
connected nodes to $\mathbf{m}^{th}$ PMU. Answering this question,
and then extending the schedule using the sequence generation implicit in
section \ref{pat_sec}, the entire addressing can in fact be evolved.
\end{enumerate}

Another set of issues arise, when addresses need to be generated
for local memory during the write-back phase of a PPU. In this
phase, the PMU is fixed: it is the local memory. However, the
location in which a datum must be written in each cycle varies. It is easy
to notice that this issue is addressed by the last two (address generation)
issues in section \ref{fissues_sec}. The order in which PPUs of other side/type will access datum for input dictates the order in
which data must be stored into these local memories. The read/write address generation
issues will hence be address \textit{jointly} later.

Throughout remaining section, we \textit{continue} to assume the
natural left-to-right labeling of vertices on either side of the
graph, as shown in figure \ref{perf_pat}.

\subsection{Solutions to Auxiliary Issues}
\label{aux_issue_sec}
The detailed solutions to above issues are discussed in this
section. A reader may choose to skip over to next section
\ref{comp_sched_sec} during initial reading.

\subsubsection{Edges used in a Perfect Access Pattern}
\label{edge_map_sec}
In this section, we address the $1^{st}$ issue raised in section
\ref{fissues_sec}. To summarize, this issue relates to finding out which
two edges of each node of the folded graph will be used for reading data
in a particular cycle. Recall from section \ref{comm_detail_sec} that
2-to-$\hat{\rho}$ switches are interfaced with output ports of various MUs.
Addressing $1^{st}$ issue is important to \textbf{synchronize} the
port selection logic of all
2-to-$\hat{\rho}$ switches, that are interfaced to PMUs of each
type.
This is because the switches address their lines in a \textbf{local} way,
i.e. labeling of their output ports is local. One has to then provide an explicit
mapping so that the \textit{local indices} of lines selected by e.g.
2-to-$\hat{\rho}$ demultiplexer switches, present at the output of each PMU, form an (unfolded) perfect access pattern. It also
completes the behavioral specification of 2-to-$\hat{\rho}$ demultiplexer
switches.

PMUs are themselves responsible for generating the
port selection bits, to be used in various perfect access patterns.
Partitioning the edge set into subsets of two, and sequencing of these subsets,
for each set of two folded PG
interconnects, as defined in section \ref{pat_sec}, is needed to define
these patterns within a perfect access sequence for each of these sets. The
address generation has been covered in detail in section \ref{add_gen_sec}
later. The interconnect connects either hyperplane nodes to point
nodes, or point nodes to hyperplane nodes, depending which of the two
folded PG interconnects we are working with.
Correspondingly, the synchronized scheduling of ports of
2-to-$\hat{\rho}$ switch is
based on partitioning either the sorted point set of the hyperplane (index)
corresponding to the switch, or the hyperplane set of the point (index)
corresponding to the switch, whichever is the role of the switch (also see
table \ref{fold_ex_tab}). Either way, each
PPU receives two data input on two edges. Given a PMU (and a local 2-to-$\hat\rho$ switch) with
index \textbf{m}, we consider the left-extremal node
(corresponding to a $\hat\rho$-to-2 switch) connected to it in the
\textit{unfolded graph}, $h^{m0}$. Here, extremality implies that the
location of $h^{m0}$ on one side of the unfolded bipartite
graph is in \textit{left extreme} w.r.t.  other connected nodes to
$\mathbf{m}^{th}$ PMU. For example, in figure
\ref{pg_bbg}, node \textbf{p2} is \textbf{extremally} connected to
node \textbf{l1}. Further, let the \textit{totally ordered}
point set of $h^{m0}$ be denoted as \{$a^{m0}_{0}$, $a^{m0}_{1}$, $\ldots$, $a^{m0}_{(\gamma
-1)}$\}, where $a^{m0}_{0}$ $<$ $a^{m0}_{1}$ $<$ $\ldots$ $<$ $a^{m0}_{(\gamma
-1)}$. Let us also \textbf{impose} an order on the edges of
$h^{m0}$, so that we define the $r^{th}$ data of $h^{m0}$ to be the edge
between $h^{m0}$ and $a^{m0}_{r}$.

However, while the $r^{th}$ data of $h^{m0}$ is $r^{th}$ leftmost or
rightmost edge of $h^{m0}$, it may not correspond to the $r^{th}$ leftmost
or rightmost edge for $h^{mi}$, due to circulant rotation applied on the
edges. Here, finding $r^{th}$ leftmost or rightmost edge of a node
corresponds to sorting the destination nodes of various edges incident on
the source node, in increasing order, and taking the $r^{th}$ element of
sequence and its corresponding edge, \textbf{exactly} as discussed in
previous paragraph. Hence we need to
have a way, which given an edge, provides which all edges are circulant
shift-replicas of it. We give the details of such circulant edge
mapping now.

Recall that $h^{m0}$ $\equiv$ \{$a^{m0}_{0}$, $a^{m0}_{1}$, $\ldots$, 
$a^{m0}_{(\gamma -1)}$: $a^{m0}_{0}$ $<$ $a^{m0}_{1}$ $<$ $\ldots$ $<$ 
$a^{m0}_{(\gamma -1)}$\} (ordered point set).
Hence \textbf{m} is equal to $a^{m0}_{t}$ for some \textbf{t}. Let us
take another \textit{arbitrary} node $h_i$, which may or may not be
connected to the PMU \textbf{m}. Without loss of generality, let
($h_{i}$ -- $h^{m0}$) = $d_i$, where the difference is taken
modulo-\textbf{J}, and hence is always positive.
Then, due to circulance, the point set of $h_{i}$ can be
represented as \{$a^{m0}_{0}+d_i$, $a^{m0}_{1}+d_i$, $\ldots$, $a^{m0}_{(\gamma
-1)}+d_i$\}. The addition here is again modulo-(\textbf{J}) addition. Because
of modulo addition, the total order $a^{m0}_{0}$, $a^{m0}_{1}$, $\ldots$,
$a^{m0}_{(\gamma -1)}$ gets shifted in a circular way over the modulo `ring'.
If we sort this set of
indices in increasing order, then \{$\hat{a}^{m0}_{0}$ ( = $a^{m0}_{0}+d_i$),
$\hat{a}^{m0}_{1}$ ( = $a^{m0}_{1}+d_i$), $\ldots$,
$\hat{a}^{m0}_{(\gamma -1)}$ ( = $a^{m0}_{(\gamma -1)}+d_i$)\}, must
be \textbf{equivalent to} $\hat{a}^{m0}_{x}$ $<$ $\hat{a}^{m0}_{(x+1)}$ $<$ $\ldots$ $<$
$\hat{a}^{m0}_{(\gamma -1)}$ $<$ $\hat{a}^{m0}_{0}$ $<$ $\hat{a}^{m0}_{1}$ $<$ $\ldots$ $<$
$\hat{a}^{m0}_{(x-1)}$ for some \textbf{x}. It can easily be verified now that if
the edge between \textbf{m} and $h^{m0}$ was $r^{th}$ edge of $h^{m0}$,
then the corresponding shift-replicated edge incident on $h_i$ is an edge
between $h_i$ and $\hat{a}^{m0}_{(r-1)}$. This edge \textbf{need not} be the $r^{th}$
element of the sequence $\hat{a}^{m0}_{x}$ $<$ $\hat{a}^{m0}_{(x+1)}$ $<$ $\ldots$ $<$
$\hat{a}^{m0}_{(\gamma -1)}$ $<$ $\hat{a}^{m0}_{0}$ $<$ $\hat{a}^{m0}_{1}$ $<$ $\ldots$ $<$
$\hat{a}^{m0}_{(x-1)}$.

As an example, we take the graph of table \ref{h_p_ex_tab}. Let
\textbf{m} be $6^{th}$ point, i.e. \textbf{p6}. From the table, its
left-extremal neighboring hyperplane is \textbf{h1}. Let \textbf{r} = 4, in
which case the $4^{th}$ edge of \textbf{h1} connects \textbf{h1} and
\textbf{p5} (not \textbf{p6}). Let $h^{mi}$ = $\mathbf{h_{12}}$, in which
case $d_i$ = 11. In terms of total order, the $4^{th}$
left-to-right edge of $\mathbf{h_{12}}$ ends on \textbf{p7}, but this edge
is \textit{not} a shift-replica of the edge $\langle \mathbf{h1},
\mathbf{p5}\rangle$. Rather, the $4^{th}$ edge of
$\mathbf{h_{12}}$, which should be a
shift-replica of $4^{th}$ edge of $\mathbf{h_{1}}$, runs between
$\mathbf{h_{12}}$ and $\mathbf{p_{\left((5+11)\mod 15\right)}}$ =
$\mathbf{p_{1}}$.  Looking at the table, we find that this is indeed true.

An LUT can be used to store this
edge-selection schedule. A simple way of generating the edge-correspondence
is to start by choosing an \textbf{m} such that $h^{m0}$ has a label of 0.
Defining an order on edges on $0^{th}$ node is then natural,
straightforward left-to-right labeling.
 
For some designs, in the last perfect access pattern, a dummy edge
is scheduled, to allow a PPU to read from a dummy
PMU, a \textit{no value}.
To implement this, selection of dummy MU for input to a
2-to-$\hat\rho$ switch is done by using an invalid value of selection
signal, so that all but one output of 2-to-$\hat\rho$ switch remain
tristated, thus achieving the effect of \textit{no value read} on one
port.

\subsubsection{Pairing PPUs with PMUs}
\label{mem_indx_sec}
In this section, we address $2^{nd}$ and $3^{rd}$ issues raised in section
\ref{fissues_sec}. To summarize, the former issue relates to finding the
PMUs to be contacted while execution of a particular \textbf{full}
perfect access pattern, while the latter issue relates to knowing if both
the data are to be read from single PMU. Like in
previous section,
addressing these issues is important to \textbf{synchronize} the
port selection logic of all $\hat{\rho}$-to-2 switches that are collocated
with PPUs of each type, for \textbf{each} perfect pattern
within the sequence
evolved in section \ref{pat_sec}. Hence, hereafter we will address
the issue of synchronizing $\hat{\rho}$-to-2 switches for \textit{any}
perfect access pattern, by using a \textit{variable} index.
Like 2-to-$\hat\rho$
switches, these switches also address their lines in
a \textbf{local} way, i.e. labeling of their input ports
is local. One has to
then provide an explicit mapping so that the \textit{local indices} of
lines selected by e.g. $\hat{\rho}$-to-2 multiplexer switches, present at
the input of each PPU, form the \textbf{same}
\textit{folded}
perfect access pattern, that the PMUs of other type
use for communication, as per
previous section \ref{edge_map_sec}. Since the two chosen ports of all
2-to-$\hat\rho$ switches of are synchronized, it is \textbf{necessary to ensure}
that the set of \textbf{destination ports} of wires stimulated during execution of a
particular perfect access pattern, and having source ports in the
2-to-$\hat\rho$ switches, are specifically assigned based on the synchronized
choice of two ports made on all of the $\hat\rho$-to-2 switches.
\textbf{Only} that way, the signal driven on a wire by e.g.
2-to-$\hat{\rho}$ switch, will pass through a selected port of
$\hat{\rho}$-to-2 switch in next cycle, towards the destined PPU. Yet again, making such selections for all patterns in a communication
sequence also completes the behavioral specification of $\hat{\rho}$-to-2
multiplexer switches.

Overall, the synchronized scheduling of ports of $\hat{\rho}$-to-2
switches for entire perfect access sequence is done by
using schedule \textbf{reciprocal} to that of schedule of ports of
2-to-$\hat{\rho}$ switches. In an \textit{unfolded} design, it is easy to
prove that this can be obtained by doing same partitioning of
hyperplane/point set corresponding to each $\hat{\rho}$-to-2
switch, but by inverting the sorted order of the set first. However, it is
not straightforward in a folded design to get the inverse schedule in such
easy way. Hence we derive the inversion by first principles as follows.

To know the contacted PMUs by a PPU, as
enabled by the contact of various $\hat\rho$-to-2 switches with corresponding
2-to-$\hat\rho$ switch, we first try to
calculate the value of \textbf{p}, where
$i^{th}$ one out of (\textbf{J/q}) PPUs of $k^{th}$ fold
accesses one or both its data in $p^{th}$ PMU for $l^{th}$
perfect access pattern. Here, $l^{th}$ perfect access pattern is defined as one
that executes $(2\star l)^{th}$ and $(2\star l +1)^{th}$
edges of $0^{th}$ PMU; see section
\ref{edge_map_sec} (and table \ref{fold_ex_tab} for an example).

\begin{algorithm}[!h]
\caption{Memory Unit Assignment}
\label{alg2}
\begin{algorithmic}[0]
\ForAll{ fold index $k$, $0 \leq k < q$}
    \ForAll{ node $h_{ik}$ in the $k^{th}$ fold of folded graph, $0 \leq i < J/q$}
        \\\Comment Let $h^{m0}$ be an extremal node of unfolded graph
        connected to some memory $m$: preferably $h^{m0}$ = $0$
        \State $h^{m0}$ $\equiv$ \{$a^{m0}_{0}$, $a^{m0}_{1}$, $\ldots$, $a^{m0}_{(\gamma
-1)}$\}: $a^{m0}_{0}$ $<$ $a^{m0}_{1}$ $<$ $\ldots$ $<$ $a^{m0}_{(\gamma
-1)}$
        \State  $d_{ik}$ $\gets$ ($h_{ik}$ -- $h^{m0}$)
        \ForAll{$l^{th}$ perfect access pattern executing on node $h_{ik}$, $0 \leq l < \gamma/2$}
          \State $p$ $\gets$ [($a^{m0}_{(2l)}+d_{ik}$)
        modulo-(\textbf{J})] modulo-(\textbf{J/q})
          \State $\hat{p}$ $\gets$
          [($a^{m0}_{(2l+1)}+d_{ik}$)
        modulo-(\textbf{J})] modulo-(\textbf{J/q})
        \EndFor
    \EndFor
\EndFor
\end{algorithmic}
\end{algorithm}

We use the non-folded regular bipartite graph to answer this. We also use
the correlation between edges belonging to same perfect access pattern, brought
out in previous section \ref{edge_map_sec}. Given a PMU
index \textbf{m} and the extremal node connected to it in the
\textit{unfolded graph}, $h^{m0}$, let its \textit{totally ordered}
point set be denoted as \{$a^{m0}_{0}$, $a^{m0}_{1}$, $\ldots$, $a^{m0}_{(\gamma
-1)}$\}, where $a^{m0}_{0}$ $<$ $a^{m0}_{1}$ $<$ $\ldots$ $<$ $a^{m0}_{(\gamma
-1)}$. For the $h_{ik}$ = {\large $\left(k\cdot \frac{\mathbf{J}}{q} + i\right)^{th}$} node in the
unfolded graph, let ($h_{ik}$ -- $h^{m0}$) = $d_{ik}$, where the difference is taken
modulo-\textbf{J}. Due to circulance, the point set of $h_{ik}$ can be
represented as \{$a^{m0}_{0}+d_{ik}$, $a^{m0}_{1}+d_{ik}$, $\ldots$, $a^{m0}_{(\gamma
-1)}+d_{ik}$\}. The addition here is again modulo-(\textbf{J})
addition. It is immediately obvious that for $l^{th}$ perfect access
pattern, node $h_{ik}$ exercises its two connections to
($a^{m0}_{(2l)}+d_{ik}$) and ($a^{m0}_{(2l+1)}+d_{ik}$).

Let $p$ = [($a^{m0}_{(2l+1)}+d_{ik}$) modulo-(\textbf{J})]
modulo-(\textbf{J/q}), and $\hat{p}$ =
[($a^{m0}_{(2l)}+d_{ik}$) modulo-(\textbf{J})]modulo-(\textbf{J/q}). Then, from theorem \ref{th1}, it
is straightforward to see that the PMUs accessed by
$i^{th}$ one out of (\textbf{J/q}) PPUs of
$k^{th}$ fold for the $l^{th}$ perfect access pattern are
found in appropriate bins of $p^{th}$ and $\hat{p}^{th}$ PMUs.
Table \ref{fold_ex_tab} is organized to explicitly exemplify such
folded mappings.
These PMUs are collocated with the PPUs on the
\textbf{other} side. The algorithm of deriving the pairing is summarized in
algorithm \ref{alg2}.
One can immediately see that while the number of LMUs have
decreased, the size of each PMU has increased proportionally. Hence
this design is a definite case of \textit{linear folding}.

\noindent \uline{\textit{Identical PMU Indices}}

A special case may arise when $p$ = $\hat{p}$, due to the modulo operation,
for a \textit{particular} \textbf{full} perfect access pattern. Then
the data corresponding to two consecutive edges of
\textbf{each} node of the entire non-folded graph get stored in same PMU.
In that case, both the data corresponding to $l^{th}$ perfect access pattern
access are found in the same PMU. The whole architecture
\textbf{still works}, as discussed in section \ref{prob_form_sec}. This
addresses the $3^{rd}$ issue raised in section \ref{fissues_sec}.
Since both data are to be fetched concurrently in a cycle by each
PPU from the same PMU in this perfect pattern, two ports per 2-to-$\hat\rho$ and $\hat\rho$-to-2
switches belonging to one paired, complemented set are used between each
pair of such matched (PPU to PMU mapping)
switches simultaneously. As expected, the concurrent usage of such pair of
ports is itself synchronized across all switches of same type, for both
2-to-$\hat\rho$ and $\hat\rho$-to-2 switches within their respective sets.
Since our interconnect graph is symmetric, exactly the same scheme can be
used to place the data produced by PPUs of the other side.

\subsubsection{Internal Layout of PMUs}
\label{mem_lout_sec}
Now we try to address the $4^{th}$ issue raised in section
\ref{fissues_sec} (and $2^{nd}$ issue of section \ref{missues_sec}
partially). To summarize, this issue relates to finding out
one/both the locations within a PMU, which is read-accessed by a
particular LPU w.r.t. execution of a particular \textbf{full} perfect
pattern. In section \ref{pat_sec}, we pointed out two different ways by
which we can combine the 2-input computations done by a fold. Ideally, the
internal layout of each PMU may simply \textit{follow the
time-order} in which the edges incident on it are scheduled. In such a
case, the address generation unit becomes simply a counter. We do
the layout design with this as objective. The layout is
\textit{briefly} described
only for conceptual clarity, and does not directly result in any design
step. It influences the design of address generation scheme, though, and
hence its value.

This internal layout depends on the design option chosen. In the
following, we explain the internal layout for first design option.
Deriving the layout for second design option on similar lines is
straightforward.

For this option, the first level substructure arises by making `$\gamma$/2'
\textbf{bins} within each PMU, one bin for each of the
$\gamma/2$ \textbf{full} (non-folded, rolled out) perfect access pattern.
A \textit{bin} is defined as a contiguous chunk of memory within
the unit. Whether
for some perfect access pattern, the re-mapped indices of 2 LMUs are same
or different, one can easily prove that the number of bins \textit{remains
constant}. The size of each bin is thus a constant as well, $2\cdot
\mathbf{q}$. Whenever $\gamma$, the degree of each node in
bipartite graph, is odd, the \textbf{last} bin contains only \textbf{q}
real data items, and \textbf{q} items corresponding to storage of dummy
edges. Given the overall size of each PMU, this wastage is
negligible. The bins are arranged in \textit{linear order} with respect to \textbf{full} perfect access patterns.
Hence the address generator simply needs to generate addresses in linear
order in each cycle, whenever read needs to be performed. For
write, the addressing is structured but not linear; see section \ref{add_gen_sec}.
In the execution of a perfect access pattern, each PPU accesses two
memory locations. It may access them either in \textbf{same} PMU, or in
\textbf{different} PMUs.
\begin{itemize}
\item In the former case, \textbf{assume} that $\mathbf{i}^{th}$
one out of (\textbf{J/q}) PPUs of $\mathbf{k}^{th}$ fold (0 $\leq$
\textbf{i} $<$ \textbf{J/q}, 0 $\leq$ \textbf{k} $<$ \textbf{q})
stores \textbf{both} it's data in (some)
$\mathbf{p}^{th}$ PMU (see section \ref{mem_indx_sec} for
calculation of \textbf{p}). If the index of current perfect
pattern being executed is \textbf{l}, then these two data are in
$\mathbf{l}^{th}$ bin of $\mathbf{p}^{th}$ PMU in two consecutive
locations. The \textbf{offset} of these locations from start of the bin is
expectedly, \textbf{2k}
and \textbf{(2k+1)}.
\item In the latter case, \textbf{assume} that $\mathbf{i}^{th}$
one out of (\textbf{J/q}) PPUs of $\mathbf{k}^{th}$ fold (0 $\leq$
\textbf{i} $<$ \textbf{J/q}, 0 $\leq$ \textbf{k} $<$ \textbf{q})
stores exactly \textbf{one} data in (some)
$\mathbf{p}^{th}$ PMU. The possible values of \textbf{p} are fixed
as detailed in section \ref{mem_indx_sec}. If the index of perfect access pattern
is \textbf{l}, then (one of the two) data is placed in $\mathbf{l}^{th}$ bin
of $\mathbf{p}^{th}$ PMU. Since we are folding a perfect
pattern, \textbf{exactly two} edges will have their re-mapped
LMU indices as that of a particular PMU. Hence,
if $\hat{\mathbf{i}}^{th}$ one out of \textbf{J/q} PPUs of
$\mathbf{k}^{th}$ fold \textbf{also} accesses
$\mathbf{p}^{th}$ PMU, and if $\mathbf{i}$ $<$
$\hat{\mathbf{i}}$, then the offset of location for data
corresponding to $\mathbf{i}^{th}$ PPU from start of the bin
is \textbf{2k}, while that of
$\hat{\mathbf{i}}^{th}$ is \textbf{(2k+1)}.
This accounts for address mapping relative to circulant rotation of edges
in the folded graph (see figure \ref{perf_pat}).
\end{itemize}

\begin{figure}[h]
\begin{center}
\includegraphics[scale=0.7]{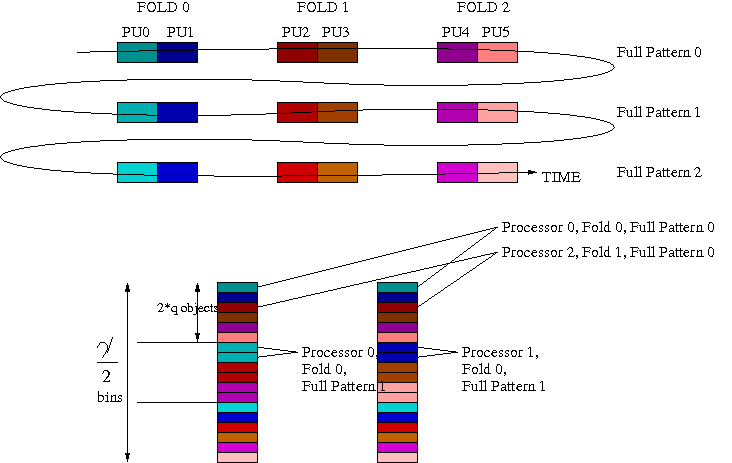}
\end{center}
\caption{Memory Layout for First Perfect Access Pattern Generation Scheme}
\label{fold_mem}
\end{figure}

A \textit{color-coded} version of memory layout
for \textbf{first} design option is shown in figure \ref{fold_mem}.
The parameters of graph in this figure are \textbf{J}=6,
$\mathbf{\gamma}$=6, and \textbf{q}=3. Hence there are \textbf{J/q}=2
PMUs. The set of 3 \textit{similar-colored} boxes in
each column, \textbf{PU*}, represent excitement of all the 6 edges incident
on them at appropriate time, 2-at-a-time. These two edges represent the two
data items consumed by each PU in a cycle. The \textbf{same color} has been
used to depict the location of these two data items in the two
PMUs. Each PMU has $\mathbf{\gamma/2}$= 3 bins, one
corresponding to each perfect access pattern. Each bin has $\mathbf{2\cdot
q\,=\,6}$ data item placeholders. For example, the two data items used
by PU3 during execution of third perfect pattern can be found in $3^{rd}$
bin of both PMUs, in $4^{th}$ location relative to start
of the bin, one each in both PMU. Both these placeholders have
same color as of the box under PU3 for $3^{rd}$ full pattern. Depending on
the perfect access pattern, a particular PPU may store both its
data items in same PMU, or not. This fact can easily be seen to be
dependent on the two indices of destination vertices of the two edges that
are being scheduled as part of that particular perfect access pattern. So,
in this example, for the $2^{nd}$ pattern, each PPU stores both
its data items in same PMU, while it does not for remaining
patterns. It can now be seen that the address generator unit is simply a
counter, the topic that we cover in next section.

Because we schedule binary operations on the PPUs in each
cycle, the PMUs are all \textbf{dual-port} memories.

\noindent \uline{\textit{Layout of Units for Local Access}}

The above layout of PMUs was evolved for read access required by
each computing node. After computing, data corresponding to each edge is
written into local PMU of the computing node. Since the same
PMUs are later accessed by PPUs on the other side of bipartite
graph for input data, the data written into these local units needs to be
organized again in the same form, as discussed above. In fact, the address
generation scheme for writing also remains same, as that of the read
accesses that follow. This addresses the $5^{th}$ and $6^{th}$ issues
raised in section \ref{fissues_sec}. To summarize, the former
issue relates to finding the location of the local memory of a
particular PPU in which data corresponding to an edge has to be written
into, while the latter relates to knowing the machine cycle in which
writing has to be performed.

\subsubsection{Address Generation}
\label{add_gen_sec}

Here, we \textit{first} address the \textit{refined} $2^{nd}$ issue raised in section
\ref{missues_sec}: if the PPU $h^{m0}$
working on some binary operation accesses the $\mathbf{m}^{th}$ PMU, then in which cycle does it access it, and at which location (local
address)? To simplify generation
algorithm, we take $h^{m0}$ as $h_0$, and \textbf{m} as $\mathbf{t}^{th}$
PMU connected to it, as discussed in section \ref{edge_map_sec}.

As such, the address generation requirements are apparent from the memory layout
and flow of time, as depicted in figure \ref{fold_mem}. Since we can
combine balanced patterns for a fold in two different ways to form a
perfect sequence, the requirements also correspondingly differ. For
illustration as well as continuation, we take the \textit{first
design option} again.
We now calculate the schedule for $\mathbf{t}^{th}$ edge of any
node, which is shift-replica of $\mathbf{t}^{th}$ edge of $h_0$. Details of
this replication were discussed in section \ref{mem_indx_sec} earlier.

\begin{lem}
\label{lem1}
For the first design option, the $\mathbf{t}^{th}$ data associated with LPU $h_{ik}$ is accessed from some PMU's some
location (computable from section \ref{mem_lout_sec}) in cycle number
{\large $\left(q\cdot \lfloor\frac{t}{2}\rfloor + k + 1\right)\cdot
\mathbf{T}$}, where \textbf{T} is the number of machine cycles
taken for completion of computation by each node.
\end{lem}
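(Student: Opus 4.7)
The plan is to prove the cycle-number formula by reading off the schedule implied by the first design option described in section \ref{pat_sec}, and then composing the three nested loops that determine when the $t^{th}$ data of $h_{ik}$ is consumed.

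First, I would recall what the first design option fixes. According to section \ref{pat_sec}, for the first design option one runs, in order: pick the perfect access pattern index $l = 0, 1, \dots, \gamma/2 - 1$; within each $l$, iterate through the fold indices $k = 0, 1, \dots, q-1$; within each fold $k$, all $\mathbf{J}/q$ physical processing units execute in parallel the 2-input computation using the two edges that belong to pattern $l$. Since each such node computation occupies $\mathbf{T}$ machine cycles (by the definition of $\mathbf{T}$), fold $k$ of pattern $l$ occupies the contiguous cycle interval $[(lq + k)\mathbf{T},\, (lq + k + 1)\mathbf{T})$, and in particular completes at cycle $(lq + k + 1)\mathbf{T}$.

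Second, I would pin down which perfect access pattern is responsible for the $t^{th}$ data. By the edge-pairing construction of section \ref{edge_map_sec}, edges $2l$ and $2l+1$ of each node are the two edges scheduled together in the $l^{th}$ perfect access pattern. Hence the $t^{th}$ data, whether $t$ is even or odd, is consumed exactly in pattern $l = \lfloor t/2 \rfloor$. Substituting this value of $l$ into the interval from the previous paragraph, the fold-$k$ slot of the relevant pattern completes at cycle $(q\lfloor t/2 \rfloor + k + 1)\mathbf{T}$, which is the claimed expression.

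Third, I would verify that $h_{ik}$ indeed participates in that slot. Algorithm \ref{alg1} combined with theorem \ref{th1} guarantees that in the fold-$k$ slot of the $l^{th}$ pattern, every one of the $\mathbf{J}/q$ nodes of fold $k$ (in particular $h_{ik}$ for each $i$) simultaneously accesses its two operands from the corresponding two physical memory units, whose identities and offsets are determined by sections \ref{mem_indx_sec} and \ref{mem_lout_sec}. Therefore the access of $h_{ik}$'s $t^{th}$ datum takes place precisely within this slot, completing at the stated cycle, independent of the node index $i$ within the fold. A quick sanity check against table \ref{fold_ex_tab}, where $\mathbf{T}=1$, $q=3$ and $\gamma=7$, confirms the formula: e.g. the $4^{th}$ edge ($t=4$, $\lfloor t/2\rfloor = 2$) of a node in fold $k=1$ is indeed accessed in cycle $(3\cdot 2 + 1 + 1)\cdot 1 = 8$.

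The main obstacle is not a conceptual one but a notational one: one must be careful that the formula is expressed as a \emph{completion} cycle of the fold-$k$ slot rather than its start (the two differ by $\mathbf{T}$), and that the ``$+1$'' in the expression reflects this choice consistently with the 1-indexed counting convention used implicitly in the running example. Once the convention is fixed, the identity is mere arithmetic on the three nested loop counters of the first design option.
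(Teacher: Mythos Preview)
Your argument is correct and follows essentially the same decomposition as the paper's own proof: counting the $\lfloor t/2\rfloor$ full perfect access patterns (each of span $q$ fold-slots) that precede the relevant one, then adding the offset $k+1$ within the current pattern to locate $h_{ik}$'s slot, and finally multiplying by $\mathbf{T}$. Your treatment is somewhat more explicit about the nested-loop structure and the $+1$ convention than the paper's, and your sanity check against table~\ref{fold_ex_tab} is a nice addition, but the underlying reasoning is the same.
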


\begin{proof}
Each PPU computes on behalf of \textbf{q} overlaid LPUs in first design option, per perfect
pattern. Further, before arriving at the right (current) perfect access pattern
in which $\mathbf{t}^{th}$ data is consumed, {\large $\lfloor\frac{t}{2}\rfloor$}
\textbf{full} perfect access patterns must have completed execution. This
is because by definition, $l^{th}$ perfect pattern is one that excites $(2
\star l)^{th}$ edge of $h^{m0}$; see section
\ref{mem_indx_sec}. Due to overlay, 
LPU $h_{ik}$ gets scheduled during the current
perfect access pattern only in cycle number (\textbf{k+1}),
counted from the \textit{beginning of the current} perfect access pattern. These two
components add up to give the cycle number required.
\end{proof}

It is straightforward to further note that the \textbf{J/q}
circulantly shifted replicas of $\mathbf{t}^{th}$ edge of $h_{ik}$, within the
\textbf{same} fold, also get scheduled in the same cycle. By
varying the values of \textbf{t} and \textbf{k}, we
can cover schedule for all the edges of all nodes, i.e. the
\textit{complete schedule}. Knowing the two
locations per cycle in each PMU that the schedule uses, the address
generation counters of various PMUs can be \textit{synchronized}.
The algorithm for address generation is
summarized in algorithm \ref{alg3}.

\begin{algorithm}
\caption{Address Generation for First Design Option}
\label{alg3}
\begin{algorithmic}[0]
        \ForAll{PMUs $a_i$, $0 \leq i < \mathbf{J/q}$, connected to $h_0$ }
    \State Find the position of edge, \textbf{t}, between $h_0$ and $a_i$, by doing
    a side-to-side scan of edges connected to $h_0$ \\
    \Comment Assume that each node computation takes \textbf{T} machine
    cycles ~~~~~~~~~~~~~~~~~~~~~~~~~~~~~~~~~~~~~~~~~~~~~~~~~~~~
    \State LPU $h_{ik}$, overlaid on some PPU,
    accesses some location of some PMU\\
    ~~~~~  (computable from section
    \ref{mem_lout_sec}) in cycle number {\large $\left(q\cdot
    \lfloor\frac{r}{2}\rfloor + k + 1\right)\cdot \mathbf{T}$} \textbf{onwards}
        \State The shift replicas of this edge within same,
        $k^{th}$ fold, get scheduled in same cycle, too
\EndFor
\end{algorithmic}
\end{algorithm}

Continuing the example graph of
table \ref{h_p_ex_tab}, let \textbf{t} = 5, so that the $5^{th}$ edge of $h_0$ ends on
\textbf{p5}. Assuming the earlier fold factor \textbf{q} as 3,
\textbf{h0} is in first fold of the graph. Hence the
$5^{th}$ edge of \textbf{h0} is
scheduled in {\large $\left(3\cdot \lfloor\frac{5}{2}\rfloor +
0 + 1\right)\cdot \mathbf{T}$} = $
7\cdot \mathbf{T})^{th}$
clock cycle.

We also state without proof, another address generation scheme.
\begin{lem}
\label{lem2}
For the second design option, the $\mathbf{t}^{th}$ data associated with
LPU $h_{ik}$ is accessed from
some PMU's some location (computable from sections
\ref{mem_indx_sec} and \ref{mem_lout_sec}) in cycle number {\large
$\left(\frac{\gamma}{2}\cdot\mathbf{k} +
\lceil\frac{r}{2}\rceil\right)\cdot \mathbf{T}$}.
\end{lem}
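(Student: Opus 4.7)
The plan is to mirror the proof of Lemma \ref{lem1}, but with the ordering of folds and perfect access patterns in the master schedule swapped, since the second design option nests the $\gamma/2$ patterns inside each fold rather than nesting the $q$ folds inside each full pattern. Once the schedule is decomposed along these lines, the cycle count becomes an exercise in bookkeeping, identical in spirit to Lemma \ref{lem1}.

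First I would make explicit the time layout induced by design option two (as summarised in section \ref{pat_sec}). Under this option, the machine cycles partition into $q$ contiguous blocks, one per fold, in fold-index order. Each block executes all $\gamma/2$ folded perfect access patterns of that fold, back-to-back. Since within a single block only the $\mathbf{J}/q$ PPUs of that fold are active, each folded pattern occupies exactly $\mathbf{T}$ cycles (not $q\cdot\mathbf{T}$ as in Lemma \ref{lem1}), because there are no overlaid LPUs left to schedule serially on a PPU within a pattern.

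Next I would count the cycles preceding access to the $t^{th}$ data of $h_{ik}$. Folds $0,1,\ldots,k-1$ have each contributed $\gamma/2$ patterns of $\mathbf{T}$ cycles, so $(\gamma/2)\cdot k\cdot\mathbf{T}$ cycles elapse before fold $k$'s block opens. Within fold $k$'s block, the correspondence of edges to patterns established in sections \ref{edge_map_sec}--\ref{mem_indx_sec} (edges $2l$ and $2l+1$ are consumed in the $l^{th}$ pattern) places the $r^{th}$ edge in the $\lceil r/2\rceil^{th}$ pattern of the block, under the $1$-indexed convention for $r$. That pattern completes $\lceil r/2\rceil\cdot\mathbf{T}$ cycles into the block, giving a total of $\bigl(\tfrac{\gamma}{2}\cdot k + \lceil r/2\rceil\bigr)\cdot\mathbf{T}$. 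The PMU and offset at which the access lands are already determined by algorithm \ref{alg2} and the layout of section \ref{mem_lout_sec}, so no new work is needed there.

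The only real obstacle is indexing discipline: reconciling the $0$-indexed $t$ used in Lemma \ref{lem1} with the $1$-indexed $r$ appearing in the present formula, and confirming that the ``access cycle'' convention (end of the $\mathbf{T}$-cycle computation window, consistent with Lemma \ref{lem1}) is preserved. Once this is pinned down, the identity $\lceil r/2\rceil = \lfloor t/2\rfloor + 1$ when $r=t+1$ makes the formula fall out immediately, and the structural argument above completes the proof.
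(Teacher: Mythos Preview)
The paper explicitly states Lemma~\ref{lem2} \emph{without proof}, so there is nothing to compare against. Your argument is the natural one and is correct: it simply transposes the two nested loops of Lemma~\ref{lem1}'s proof (folds now outermost, patterns innermost), which is exactly what the second design option of section~\ref{pat_sec} dictates. Your remark about the $t$ versus $r$ indexing is well taken---the lemma as stated mixes the two symbols, and your identification $r=t+1$ (so that $\lceil r/2\rceil=\lfloor t/2\rfloor+1$) is the only reading that makes the formula consistent with Lemma~\ref{lem1}.
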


Each PMU is a true dual-port memory, and hence each port requires
a separate address generator. If we stick to the
convention defined next, it is easy to verify that \textit{both the address
generators will be a counter}. Assume that the execution of
\textit{next}
perfect access pattern needs to be scheduled at each port now. Each PPU accesses two memory locations. For the next pattern, it may access
them either in \textbf{same} PMU, or in \textbf{different} 
units.
\begin{itemize}
\item In the former case, exactly one PPU per fold will store
both its data items of this pattern in the particular PMU. Then,
in the relevant machine cycle, let the defined convention be that the
\textit{first port} read/write the data item at offset \textbf{2k} from the
beginning of the bin corresponding to this pattern.
By similar convention, in the same cycle,  \textit{second port} reads/writes the data item at offset
\textbf{2k+1} from the beginning of the bin corresponding to this pattern.
Here, \textbf{k} is the index of the fold that is currently being
scheduled.
\item In the latter case, exactly two PPUs per
$\mathbf{k}^{th}$ fold read/write one data item each into the
PMU in question.
Let the re-mapped indices of these PPUs (after folding) be
$\mathbf{i}$ and $\hat{\mathbf{i}}$. Also, without loss of generality, let
$\mathbf{i}$ $<$ $\hat{\mathbf{i}}$. Then, in the relevant machine cycle,
let the defined convention be that the \textit{first port} read/write the
data item at offset \textbf{2k} from the beginning of the bin corresponding
to this pattern, which is exchanged with PPU $\mathbf{i}$.
By similar convention, in the same cycle, 
the \textit{second port} then reads/writes the data
item at offset \textbf{2k+1} from the beginning of the bin corresponding to
this pattern, which is exchanged with PPU $\hat{\mathbf{i}}$.
\end{itemize}

\noindent \uline{\textit{Write Address Generation and Multiplexing}}

We now address the related address generation issue pointed out in section
\ref{missues_sec}: in write-back phase to local memory by a PPU, in what sequence of locations must the output data generated in
successive clock cycles be stored? We had hinted that the order in which
PPUs of other side/type will access this generated datum as
their input input, dictates the sequence of locations in the local memory.

We start by observing that \uline{in absence of folding}, the data must be
written in reverse (linear) order of locations into local memory. From
previous section, the read order of a PMU was found to start from
$0^{th}$ location, and increase in a step of 1 till the last location,
which we term as \textit{forward linear} order. The write order, which is
reverse of this, is hence termed as \textit{reverse linear} order. This is easy to
prove using circulance property of the perfect matchings that form the
each perfect pattern, which in turn combine to form the perfect access
sequence. Take two successive edges incident on a node having
index $s$, on one side of
the graph, and let $d$ and $\hat{d}$ be the indices of end points
of these edges (on other side of graph) such that
$\hat{d}$ $>$ $d$ without loss of generality. These two edges are part of
two different perfect matchings. When we look at e.g. node $d$ and observe
the perfect access pattern to which these two edges belong, one can see
that the node $s$ contributes one of these edges incident on it, \textit{plus}
an edge that is part of the perfect matching to which the other edge
belongs, to the (same) perfect access pattern. Let the other end of this
different edge be a node having index $\hat{s}$. If $\hat{d}$ $>$ $d$, it is straightforward to
prove that $s$ $>$ $\hat{s}$. Hence for read order to be forward linear,
the write order must be reverse linear, \uline{in absence of folding}. For
the example folding of graph of table \ref{h_p_ex_tab}, one can see this
order in table \ref{write_gen_tab}. The table tabulates the
data input sequence of \textit{point nodes},
as generated in certain order by various
\textit{hyperplane nodes}. In the table, \textbf{A}-\textbf{O} are
(15) hyperplane labels, and for each hyperplane, e.g.,
\textbf{A},  \textbf{A0} represents $0^{th}$ edge data, out of 8 edge
datum\footnote{7 real + 1 for dummy edge for last perfect access
pattern},
generated by hyperplane \textbf{A}. Further, the numbering \textbf{0-6} has
been done based on perfect access pattern-based grouping of edges, that are
incident on the consumer (point) nodes. Thus, from table
\ref{h_p_ex_tab} or Fig. \ref{15_pg_fig}, \textbf{A0} is the data that is
read from hyperplane \textbf{0} by point \textbf{0}, \textbf{A5} is the
data that is read from hyperplane \textbf{0} by point \textbf{8}, and so
on.

\begin{table}[h]
\caption{Sequence of Data Items Consumed by Point Nodes of Graph in Table
\ref{h_p_ex_tab}}
\label{write_gen_tab}
\centering
{\normalsize
\begin{tabular}[!h]{|c|c|c|c|c|c|c|c|}
\hline
\textbf{Point Index} & \multicolumn{7}{|c|}{Sequence of Data Item
Output} \\ \hline \hline
0 & A0 & F6 & H5 & K4 & L3 & N2 & O1 \\ \hline
1 & B0 & G6 & I5 & L4 & M3 & O2 & A1 \\ \hline
2 & C0 & H6 & J5 & M4 & N3 & A2 & B1 \\ \hline
3 & D0 & I6 & K5 & N4 & O3 & B2 & C1 \\ \hline
4 & E0 & J6 & L5 & O4 & A3 & C2 & D1 \\ \hline
5 & F0 & K6 & M5 & A4 & B3 & D2 & E1 \\ \hline
6 & G0 & L6 & N5 & B4 & C3 & E2 & F1 \\ \hline
7 & H0 & M6 & O5 & C4 & D3 & F2 & G1 \\ \hline
8 & I0 & N6 & A5 & D4 & E3 & G2 & H1 \\ \hline
9 & J0 & O6 & B5 & E4 & F3 & H2 & I1 \\ \hline
10 & K0 & A6 & C5 & F4 & G3 & I2 & J1 \\ \hline
11 & L0 & B6 & D5 & G4 & H3 & J2 & K1 \\ \hline
12 & M0 & C6 & E5 & H4 & I3 & K2 & L1 \\ \hline
13 & N0 & D6 & F5 & I4 & J3 & L2 & M1 \\ \hline
14 & 00 & E6 & G5 & J4 & K3 & M2 & N1 \\ \hline

\end{tabular}
}
\end{table}

\uline{In presence of folding}, the write order has to \textit{factor in
interleaving} of data, as is done by the overlaid (point) nodes. 
For example, recall that hyperplanes \textbf{A}, \textbf{F} and \textbf{K} are
overlaid, and so on. Data corresponding to edge \textbf{A0} is consumed by
point node \textbf{0}, which belongs to \textbf{first} fold of point nodes
(\textbf{i} = 1), during \textbf{its} execution of \textit{first} perfect
access pattern (\textbf{j} = 1). Similarly,
data \textbf{A4} is consumed by point node 5, which belongs to
\textbf{second} fold of point nodes (\textbf{i} = 2), during \textbf{its} execution of
\textit{second} perfect access pattern (\textbf{j} = 2). Since \textbf{A0}
is to the \textit{left of} \textbf{F6}, when looking from point node
\textbf{0}, as in Fig.
\ref{15_pg_fig}, \textbf{A0} is stored in ((i-1)$\times$6)+(j-1)*2+0 =
((1-1)$\times$6)+(1-1)*2+0 = $\mathbf{0}^{th}$ location. Similarly,
\textbf{F6} is stored in ((i-1)$\times$6)+(j-1)*2+1 =
((1-1)$\times$6)+(1-1)*2+0 = $\mathbf{1}^{st}$ location, and \textbf{A4}
in ((i-1)$\times$6)+(j-1)*2+1 = ((2-1)$\times$6)+(2-1)*2+1 =
$\mathbf{9}^{th}$ location. This way, \textit{algorithmically}, the entire
folded write order can be generated.

Such a write-back address sequence can generally be implemented using an
LUT. A multiplexer is also generally needed to choose between read and
write address generator's outputs, to be interfaced with PMU's
address inputs, in a particular clock cycle.

\noindent \uline{\textit{Implementing the Generator}} \\
There are two ways by which PPUs can access operands stored in
PMUs in a particular cycle. In the first way, the PPUs
themselves calculate/generate and place the address/location using an extra
bus, for a memory access. This is a standard practice in
von Neumann
architectures. Since there is a deterministic structure in access order, it
is possible to do the other way round. One can alternatively build and
embed an address generator within the PMU (alongside its controller), which places two data
objects on the two ports (or alternatively, allows two data objects to be
stored at two locations), given the cycle number. For each PMU, we need one address generation unit, in either case.

\subsection{Derivation of Complete Schedule}
\label{comp_sched_sec}
With the individual issues related to complete schedule derivation for a
folded PG-based system addressed in previous section, we now describe how
the entire computational schedule, \textbf{without pipelining}, can be
arrived at. It is easy to understand this schedule by looking at the
detailed structure of the system, as in figure \ref{full_arch_fig}. We
assume that LPUs of first type take $P_1$ units
of time, and of the second time take $P_2$ units of time, for
their computation. A PPU
is an overlay of \textbf{q} LPUs, and hence the two types of
PPUs take $\mathbf{q}\cdot P_1$ and $\mathbf{q}\cdot P_2$
units of time to compute, respectively. The required \textit{expansion} of
this interval of computation, based on the design option chosen as per
section \ref{pat_sec}, can be easily generated from original schedule
interval of one representative out of the grouped PPUs.

After e.g. first type of PPUs finish computation, the output
will need to be stored into local PMUs. There are $\gamma$ edges
per node, overlaid \textbf{q} times. Accounting for dummy edges whenever
$\gamma$ is odd, $\lceil \frac{\gamma}{2} \rceil \cdot \mathbf{q}$ units of
time will be taken by each PPU to write back all its output
data into local PMU. The schedule for this interval is simply a
counter that drives the write-back location generation logic, and hence can
be easily extended by a factor of \textbf{q}.

After local storage, the new data will be required by the PPUs
of other side. This requires participation of both 2-to-$\hat\rho$ and
$\hat\rho$-to-2 switches in \textit{almost lockstep} fashion.
More specifically, to allow the data to be read from one
end from a PMU, and passed across the other end of the
interconnect to a PPU, switches of both types in each
of the two sets are active in same set/interval of machine cycles, except one
cycle each at \textit{either} end of the interval. This minimal staggering
is because the system being a \textit{completely
synchronous system}, $\hat{\rho}$-to-2 switches can only be activated one cycle
later, after 2-to-$\hat{\rho}$ switches have put the data on the interconnect
wires. The cycle interval in which switches in each set are
active, starts at a cycle number computable
from section \ref{add_gen_sec}, and lasts {\large $\mathbf{T}\cdot
\frac{\gamma}{2}$} cycles. Here, \textbf{T} is equal to either
$P_1$ or $P_2$, depending on which PPUs require the data. The
data is read, for one cycle, only \textbf{every} \textbf{T}
cycles. Hence switches are \uline{periodically enabled} every
\textbf{T} cycles.

The above schedule is symmetric, and hence with appropriate change in the
set of signals, can be used to derive the other half of
schedule, in
which other sets of PPUs, local PMUs, and
2-to-$\hat\rho$ and $\hat\rho$-to-2 switches are involved.

\subsubsection{Complete Schedule with Pipelining}
\label{pipeline_sec}

Pipelining the above system leads to saving of clock cycles to some
extent, and corresponding recovery of throughput. In a
\textit{partially or fully structural model} of a VLSI
system that is composed of \textit{component hierarchies}, pipelining can
be tried out between every two components that are \textbf{adjacent to each
other} in the data flow, and belong to same level, at every level of
component hierarchy. For our intended system, pipelining
can be performed at \textbf{three}
levels. It can be tried at the graph level, by trying to
pipeline computation done by one type of PPUs, with the other
type of PPUs. It can also be tried at the
high-level architecture level, as in figure
\ref{full_arch_fig}, and finally at micro-architecture level,
i.e. computation done by each node. In the latter case, each node can
consume 2 inputs (1 at each port) every clock cycle, and hence value of
\textbf{T} becomes 1 for the sake on periodic input consumption.
In the former case, one can, for example, pipeline
the write-back phase of a PPU. As soon as a PPU is
ready with some data that can be output, it starts storing it in its local
PMU in a \textit{pipelined fashion}. A prototype design that we did
using this methodology uses pipelining wherever feasible. Doing such
pipelining will shrink the simple folded schedule discussed earlier.
However, with appropriate guidelines, the above shrinking can also be
automated. The (positive) impact of these two levels of pipelining
on throughput depends on the time taken by each PPU,
\textbf{T}, which varies across systems being modeled. Hence the
improvement figure is not generalizable.

Finally, for pipelining at the graph level, the second
design option discussed in section \ref{pat_sec} opens up an
avenue to do coarse-grained pipelining of the system. Recall that in this
design option, we may first sequentially schedule all $\gamma/2$ 2-input
computations done by each PPU in one fold only, which cover up
the complete computations of \textbf{J/q} nodes in the non-folded version.
In default mode, the system scheduler waits for $\mathbf{(q-1)}$ more
rounds of such computations to cover remaining nodes of one side of the
unfolded graph, and then schedules the communication of the results of
entire one side computation to the PMUs belonging to the
PPUs on
other side of the graph. Instead, we can start communication as soon as
$\mathbf{J/q}$ computations over PPUs of one fold is over. \textbf{In parallel}, we can also
start doing computation for next lot of \textbf{J/q}
PPUs.

To characterize the impact of this level of pipelining
on throughput, we assume that
2-input computations by each PPU happen in a
single cycle. Further, due to dual-port memory assumption, and no
write/write conflict while writing into PMUs (see section
\ref{mem_lout_sec}), one can assume that 2 data get stored in a memory
unit per cycle. However, there may be additional communication latency
due to e.g. passage of data through switches, before it arrives at the
port of memory units. Assume this constant latency to be $\Delta$ cycles.
Then, it is easy to see that each half-iteration (input of data, computation
and communication of resultant data) over all folds takes
{\large $\left(\frac{\gamma}{2}\times \mathbf{q} + 2\Delta\right)\cdot
\mathbf{T}$} cycles
\textit{optimally}. This is almost a two-fold improvement over a
non-pipelined design, where a half iteration would have taken
{\large $\left(\frac{\gamma}{2}\times \mathbf{q}\right)\cdot \mathbf{T}$} cycles. The cost of
$\Delta$ can be amortized in the case of big-sized problems (higher
$\gamma$), as is practically always the case.

\subsection{Putting it all Together: Summary of Design Methodology}
We start the usage of this methodology by accepting an annotated
PG bipartite graph as input specification, in which the nodes are
annotated with their untimed behavior. The graph is
parameterized in terms of order \textbf{J} and (regular) degree $\gamma$. If
not pre-sorted, then the bi-adjacency matrix of the graph is first sorted
so that the \textit{circulant symmetry} inherent in PG bipartite graphs
becomes explicit. If \textbf{J} is a prime number, we first expand
the graph to non-prime order, as in section \ref{prime_sec}. The choice of
number of nodes, $\alpha$, to be added on each side of the graph can be
influenced by two factors. One is the factorizability of
$(\mathbf{J}+\alpha)$, and the other is whether for some value of $\alpha$,
equation \ref{eqn_grow} becomes an equality. In such case, the expanded
degree of each node is lesser. We then
calculate all possible factors \textbf{q} of
\textbf{J}. We finally select one of these factors based on various
judgements. One of the possible reasons could be if the modulo operation of
end point of two edges leads to the same index or not. Another reason could
be the overall area budget (for example, as approximated using gate count).
We then instantiate \textbf{J/q} PPUs and PMUs, as well as \textbf{J/q} 2-to-$\hat{\rho}$ and $\hat{\rho}$-to-2
switches to interface them. This set of components correspond to
one side of the bipartite graph, and hence is further \textbf{duplicated}
to implement the other side of the bipartite graph as well. The
internal micro-architecture of PPUs is then suitably modified
to handle folding, as per section \ref{micro_mod_sec}. Local
interconnect is added between each of the two ports of each of the
$\hat{\rho}$-to-2 switch, and a port of its local PPU. Local
interconnect is also added between each of the two ports of each of the
2-to-$\hat{\rho}$ switch, and a port of its local PMU. Two instances of global interconnects, one each
between the 2-to-$\hat{\rho}$ and $\hat{\rho}$-to-2 switches of
\textit{opposite} sides, are designed using guidelines in section
\ref{mem_indx_sec}. We then generate the \textit{folded} perfect
access patterns for communication over these global interconnect instances,
as per algorithm in section \ref{pat_sec}. If any initialization data is
to be provided to any type of LPUs, it is provided in a
multiplexed way to the overlaid PPUs, at the beginning of the
computation. Similarly, any output data from LPUs of one type
is to be physically obtained by demultiplexing the output of corresponding
overlaid PPUs. At this point, the control path and the
timing of the system are evolved. The invocation (start) of this sequence
signifies flow of data inputs for PPUs on one side of graph,
from PMUs located on other side of graph.
Accordingly, partial computations can be done on these
PPUs, as soon as some subset of data arrives. At the end of invocation of
one complete perfect sequence, one side of graph is through with its
parallel computation. Another invocation of perfect sequences communicates
the resultant data into the local memory of PPUs on other side
of the graph. These PPUs can then again start acting
immediately on this recent data. If the computation is iterative, the same
sequence repeats. The address generation of various PMUs,
(whose layout is described in section \ref{mem_lout_sec}) whenever
a perfect sequence is active, is governed by the algorithm in section
\ref{add_gen_sec}. The generation of selection signals for various
switches (described in section \ref{switch_sec}) is governed by
derivations in section \ref{edge_map_sec} and \ref{mem_indx_sec}.
The derivation of overall schedule is finally done, as discussed
in section \ref{comp_sched_sec}.

\section{Models, Refinement and Design Space Exploration}
\label{mod_ref_sec}
As introduced so far in this paper, we use five successive levels of
abstraction for
models, and correspondingly four refinements in our methodology. We now
show the correspondence of this methodology to general synthesis-based
communication architecture design methodologies, both generic and specific.
Such correspondence was found out post-specification of this methodology,
reinforcing our belief that practical, useful design flows can be
implemented for this methodology.

\subsection{Model Abstraction Levels in Generic SoC Design}
In generic SoC design, following models are used at various levels of
abstraction \cite{ocn_book}, \cite{comm_soc}.
\begin{description}
\item[Functional Model] is generally a task/process graph model, capturing
just the functionality of the system. 
\item[Architecture-level Model] is created by refinement of functional
models. They introduce various hardware blocks/components,
hardware/software partition (if any), their behavior and abstract channels
for inter-communication. \\
Such models belong to the category of
transaction-level models supported by various system-level
languages, which model communication events between modules over such
channels, and their causality etc. \cite{gajski_tlm_pap}.
\item[Communication-level Model] is created by refinement of e.g.
transaction-level model, and describes the system communication
infrastructure in more detail, many a times to the
cycle-accurate level of granularity, or to an approximation of
it otherwise \cite{gajski_tlm_pap}.
Most amount of design space exploration for communication architecture
design happens at this level. The computation details are generally not
refined, while refining a transaction-level model.
\item[Implementation-level Model] is generated by refining
communication-level model, and captures details of \textbf{all} the
components of computation and communication subsystems at the signal and
cycle-accurate level of detail. They are typically used for
detailed system verification and even more accurate analysis.
\end{description}

We now explain the correspondence of abstraction levels. In our design
methodology, the
starting graph is a Tanner graph additionally annotated with each node's
\textit{untimed behavior}, i.e. the functionality.
This suffices to be the
\textit{functional model} for the
intended system. The \uline{first} level of refinement to this model,
defined in section \ref{non_simd_pap_sec}, adds some details (such as
barrier sync requirement) to this model, \textbf{specific} to the class of
applications this methodology targets. This refinement is itself
\textit{optional}, and leads to a functional model only. The \uline{second}
level of refinement takes the functional model to \textit{architecture
level}, and
is explained in section \ref{sys_arch}. Real PPUs and
PMUs are  assigned and cross-connected at this level. These
connections represent channels that carry the \textit{uniform}
communication traffic as per \textit{Flooding Schedule}. Main part of
design space exploration is carried out next, as discussed in next section.
This \uline{third} level of refinement transforms the
set of channels in architecture model to a
cycle-accurate communication model, in form of the generated
\textit{folded communication schedule}, as in section \ref{pat_sec}.
The specification of computation is also refined to introduce
timing, as per section \ref{micro_mod_sec}. The overall system is thus
\textit{approximately-timed}, as defined in \cite{gajski_tlm_pap}. There
are two design options to be explored at this level; see section
\ref{pat_sec}. Finally, the \uline{fourth} level of refinement takes this
schedule to implementation-level model, which corresponds to
generation of RTL for all components of the communication
subsystem (switches, address generators etc). From this point onwards,
successive refinement to more detailed models based on some standard
RTL-based design flow is done to complete the design.

As one can observe, we do not need a high-level model more complex than
an annotated
bipartite graph to start with, unlike e.g. Kahn Process Networks as
starting model in COSY methodology \cite{cosy_methodology}. Similarly, we
do not need standard intermediate level models such as VCI
models, again in COSY methodology.

\subsection{Similarity to Levels in SpecC Design Methodology}
SpecC language was created by Gajski et al in the backdrop of evolving a
system-level, platform-based design methodology \cite{spec_c_methodology}.
It uses four model abstractions: specification, architecture, communication, and
implementation. The first, specification model level, is defined to
capture the functionality of the system using sequential or concurrent
behaviors that communicate via global variables or abstract channels. It is
similar to functional model mentioned by us in
previous section, and hence a Tanner graph suffices to be again
called a specification model. The architecture, communication and
implementation levels have same meaning as in previous section, but
\textit{in context of SpecC language constructs}. Without going into more
details here, we have found that our models and refinements again
correspond closely to models and refinements defined in SpecC-based
methodology. As in our case, the implementation model, as an RTL model, is
passed on to some standard design flow.

\subsection{Design Space Exploration}

As discussed in beginning of this paper, this folding scheme can
\textbf{also} be viewed as one of evolving custom communication architecture.
Since we use a \textbf{custom} communication architecture, once the custom
architecture is fixed, the next step is usually to perform an
\uline{exploration} phase of the \uline{design space} \cite{ocn_book}.
On-chip communication architecture \textit{design space} is generally a
union of topology and (communication) protocol parameter spaces, and
\textit{exploration} is needed to determine the topology and protocol
parameters that can best meet the design goals. The protocol can be a set
of communication mechanisms working together (e.g.  routing, flow control, switch
arbitration etc. in case of a network-on-chip). The protocol parameters
need to be decided to satisfy various application constraints. These
constraints generally relate to performance, power, area, reliability etc.

It is easy to recognize from the earlier summary of methodology, that the
choice of fold factor, \textbf{q}, impacts at least the throughput and area
figures. As such, \textbf{q} is a parameter that is required to specify the
topology (number of vertices per fold, and hence number of point-to-point
connections needed). Also, at times when the number of nodes on one side of
the bipartite graph, \textbf{J}, is prime, we need to add a variable number
of nodes, $\mathbf{\alpha}$ to make the graph size factorizable. Hence a
limited amount of topology exploration, by varying \textbf{q} and
$\mathbf{\alpha}$, is needed, as already hinted in the summary earlier.
Protocol exploration is not needed in its full detail,
since the choice of algorithms driving various components is already
fixed (detailed throughout section \ref{methodology_sec}), and is
\textbf{optimum} for each component (e.g. linear addressing for PMUs) due to various customizations. The
\textit{lone} important protocol parameter to be decided is the wire
switching frequency, which can be fixed without any algorithm-level
explorations.

If one looks at throughput constraint, then it is governed by
both switching frequency as well as the value of \textbf{q} (\textbf{q}
participates in throughput-area tradeoff, as pointed earlier). If one looks
at energy consumption, then switching frequency alone governs the energy
consumption, and not the value of \textbf{q}. These constraints provide the
desired switching frequency, generally as an interval (throughput constraint
providing lower bound and power constraint providing upper bound). This
also stems from the fact that power and performance generally trade off in
system design. The \textit{actual} switching frequency can only be determined during physical
design phase, based on placement-and-routing information. Since we suppose
that beyond RTL generation, a standard synthesis flow will take over the
remaining system design, in the best case, a high-level floorplanner
\cite{fp_bus_syn} can be integrated with high-level synthesis tool in the
standard design flow part. Integrating these two will logically reduce the
number of iterations needed to fix the frequency. However, it can then take
extra efforts to implement a \textit{feedback loop} across two flows (one
custom and one standard), in order to
explore around the switching frequency. With or without such feedback loop,
the design space with upto two variables, becomes
limited, and can be \uline{explored in polynomial time}. This is unlike other
explorations such as synthesis of bus-based architectures, whose
exploration is generally NP-hard. In those cases, one has to further choose from
various categories of synthesis techniques (simulation-based,
heuristic-based etc), and the exploration time is also higher.

\section{Addressing Scalability}
\label{scale_sec}
As pointed out earlier, this methodology can handle certain scalability
issues. This implies that a new folded system
architectures be designed to handle higher 
input block sizes. Changing the value of \textbf{J} means that the set of all possible
factors (q) of \textbf{J} also change. However, usage of
PG implies that many components such as
individual PPUs, address generation units can be re-used,
with \uline{very limited modifications}. The modifications are in the
contents of LUTs, if any component uses them, and not in the behavior of
the component, such as linearity of address generator.
Similarly, the PMU size increases, though
the internal structure remains same. The switches need to be redesigned,
though.

\section{Advantages of Static Interconnect}
\label{stat_adv_sec}
In this section, we quantify the $4^{th}$ advantage listed in beginning of
this paper. As the computation moves from one fold to another, in our
scheme, same 2-to-$\hat\rho$ and $\hat\rho$-to-2 switches can be used
across the \textbf{q} folds, due to perfect overlay. Same is not true in
case of any other folding. Hence we will either need different multiplexers
and demultiplexers to handle data distribution in each fold, or a single
big multiplexer and demultiplexer which is a union of all these. Also, a
specific control signal will need to be added, which will specify
computations for which particular fold is being carried out. It will be
used to select the corresponding multiplexer and demultiplexer. In the
\textbf{worst case}, in some other folding, there will be upto (\textbf{q -1})
more multiplexers and demultiplexers, one more internal control signal, and
of course upto \textbf{q} times more used wiring resources, since connections
are not getting re-used. Hence our folding scheme offers a lot of resource
saving, and some degree of latency saving.
\section{Prototyping and Evaluation}
\label{exp_sec}
\subsection{Proof of Concept}
For proof of concept, an iterative decoder having a Tanner graph
representation that of the PG bipartite graph example tabulated in table
\ref{h_p_ex_tab} was prototyped in behavioral VHDL. The prototype
has been described in \cite{ldpc_pap}. To recall, the example
has 15 point and 15 hyperplane nodes, each with a degree of 7, in the
bipartite graph. The decoding algorithm employed by the decoder is the
hard-decision bit-flipping algorithm \cite{guilloud}. All the refinements,
and design space exploration was done manually. A fold factor of 3 was used
to fold the bipartite graph, thus requiring (5+5) PPUs and
(5+5) PMUs for implementation, \textbf{plus} (5+5) 2-to-5 and (5+5)
5-to-2 switches. The interconnect between ports of switches of opposite
side was based on guidelines discussed in section \ref{switch_sec}. 
The folded graph schedule was already worked out in table
\ref{fold_ex_tab}. First design option was used to combine perfect access
patterns into perfect access sequences. The edges of various folds were
indeed found to overlay \textbf{perfectly}, following theorem
\ref{th2}. Since the node degree is odd (7), a dummy edge was needed to be
added to each node as expected, and each node would ignore the value
arriving on dummy input during its computation. The micro-architecture of
all nodes was changed to create 3 copies each storage element, since in
bit-flipping algorithm for decoding, all nodes have at least one
computation that consult all inputs (counting all bits or XORing all
bits). 4 LUTs were used to store the port selection schedule to
drive 2 sets of 2-to-5
switches, and 2 sets of 5-to-2 switches. The centralized control path was
implemented using the concept of microcode sequencing. Each
iteration of decoder takes 63 clock cycles, while the unfolded version
takes 35 cycles. Since we implemented two levels (out of three levels
suggested in section \ref{pipeline_sec}) in this design, the throughput
\textit{reduction factor} lessens from being (\textbf{q} =) 3 to
$\frac{63}{35}$, i.e. just 1.8.

\begin{table}[h]
\caption{Parameterized Model of Prototyped System}
\label{prot_sys_tab}
\centering
\begin{tabular}[!h]{|c|c|}
\hline \hline
Order of PG Bipartite Graph & 15 nodes on each side \\ \hline
Degree of each node & 7 \\ \hline
Fold Factor & 3 \\ \hline
Additional Nodes added for non-primality & 0 \\ \hline
Number of PMUs accessed by each PPU ($\rho$) & 5 \\
\hline
Number of PPUs accessed by each PMU ($\rho$) & 5 \\
\hline
No. of output ports ($\hat\rho$) of 2-to-$\hat\rho$ switches & 5 \\ \hline
No. of input ports ($\hat\rho$) of $\hat\rho$-to-2 switches & Same as above \\ \hline
Dummy Edge used in scheduling & Yes \\ \hline
Size of each LMU & 24 data units \\ \hline
Address generation LUTs used & 4 \\ \hline
Computation time for each PPU & 12 clock cycles \\ \hline
Schedule length for 1 iteration & 63 clock cycles \\ \hline \hline

\end{tabular}
\end{table}

The above design methodology was also employed to design a
specific high-performance soft-decision \cite{fossorier} decoder
for a class of codes called LDPC codes. The design has been patented
\cite{ldpc_foldpat}. A detailed C-language simulator was also
developed to verify the entire schedule. Table \ref{fold_ex_tab} was
generated using this simulator. A front-end to generate
per-cycle schedule in form of figures, to \textit{visually} verify various
properties of folding, was also implemented. An animation using such
component schedule figures, depicting the overall schedule,
which was generated by this front-end, can be found in
\cite{fold2_techrep}. All the programs are available from authors
on request.

Similar to real employment of this scheme, the alternative folding
scheme (discussed in \cite{expanders}), was employed to design a DVD-R
decoder using alternative, novel class of error-correction codes developed
by us. The design has been applied for patent as well. For this
decoder system, (31, 25, 7) Reed-Solomon codes were chosen as
subcodes, and (63 point, 63 hyperplane) bipartite graph from {\normalsize
$\mathbb{P}(5,\mathbb{GF}(2))$} was chosen as the \textit{expander graph}.
The overall expander code was thus (1953, 1197, 761)-code. A
fold factor
of 9 was used for the above expander graph to do the detailed design. The
design was implemented on a Xilinx Virtex 5 LX110T FPGA \cite{overview_manual}.

\subsection{Synthesis Tool Prototyping}
To showcase the proof of concept, we have developed a
synthesis tool in C++ language, that aims at implementing a semi-automated
synthesis tool for the methodology. The synthesis tool was designed to emit VHDL mixed behavioral/structural
model. That is, components have a behavioral model, but
they are instantiated structurally wherever used.
The tool software is available on request.

In this software, the first three refinements, which are mainly
data-centric, revolve around populating various data structures.
Implementation of the last stage of refinement relates to emitting the RTL
model, and forms the \textbf{bulk} of the software. To implement this
stage, we started by \textbf{parameterizing} the (envisaged) RTL
specification of the system that is the result of the refinements.
As discussed earlier throughout the chapter, all nodes have a behavioral
template (e.g. the switch). For a given system specification, we
instantiate each system component with appropriate values for the generic
parameters, before integrating them and imposing them with a global
schedule.

To be able to integrate, the signal names and types used at component
interface have been made compatible. The signal name compatibility in the
entity description, and the component instantiation follows the default
rule in VHDL: they are same. Most of the other \textit{default} entity
binding rules of VHDL are also followed.

After parameterization, we proceeded with identifying those portions of
behavioral model of each component, that is affected by change in one or
more parameters. On a case-by-case (entity-by-entity) basis, we devise
small algorithms to generate and emit such portions, given specific values
to the parameters that \textit{affect it}. In the software,
by placing such
generated portions in right place w.r.t. those portions of the hardware
model, that are \textit{unaffected} from any change in any parameter, we
generate the entire behavioral model instance from the
template of that component. We now give an example of
such a generation in our software, for one component, the \textbf{memory unit}, to demonstrate
the generation strategy.

\subsubsection{Memory Unit Generation}
The Memory Unit entity integrates two local address generators, an address
mux and a \textit{dual-port} memory element. One intended behavior of this
entity is to store the computation output by the local processing units
appropriately. The other intended behavior of this entity is to provide/get
read for the computation input by processing units on the other side of the
PG graph. Since we use symmetric graphs, a single memory unit template, and
similar instantiation serves the purpose for memory units that are used in
the overall RTL specification of the system. Different address Generators
are used for generating the read and write addresses. These addresses are
muxed onto single address interface to the memory element, using the
address mux. Along with the read/write signal (R/$\overline{W}$), the
address is used to write or read the data from the dual port memory unit.
An interface diagram of the entity is shown as in figure
\ref{mem_unit_fig}.

\begin{figure}[h]
\begin{center}
\includegraphics[scale=0.6]{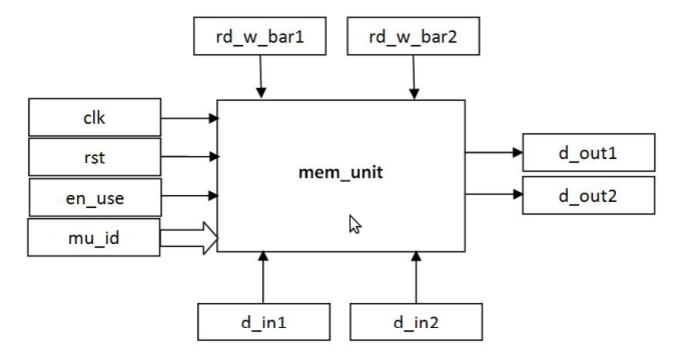}
\end{center}
\caption{Interface Diagram of Memory Unit}
\label{mem_unit_fig}
\end{figure}

The signals used by this entity are the clock, reset, enable, 2
$R/\overline{W}$ signals for two ports of memory element, 2 inputs and 2
outputs, and a memory unit id. The signals that need to be parameterized
are the signal input/output and memory unit id, since their width is
variable. The width of input/output signals depends on the width of
fixed/floating point arithmetic being used for computation. The width of
memory unit id depends on how many memory units are present in the system.
The specific routine in the software that deals with generation of this
entity, takes as its inputs these variable widths. It then generates
formatted outputs based on these parameters, that are the portions affected
by variability of these parameters. For memory units, such portions turn
out to be just part of PORT specification, e.g. \\
           mu\_id     : IN STD\_LOGIC\_VECTOR( value(mu\_width) downto 0 );
           \\

We emit such portions of VHDL model using string formatting
routines. The remaining model, which is unchanged, contains of multiple
pieces of pre-written VHDL files. These files, and the formatted parts of
the model are then appended and sequenced together, before being output as
a single RTL model for the entity.

The complete details of synthesis of all entities within the
system is described in \cite{utk_rep}. The output RTL model of the example
decoder, generated using this tool, was tested for its \textbf{semantic
correctness} using ModelSim 6.6. To also demonstrate the intension that
this model be synthesizable further (i.e., uses only the \textit{synthesis
subset} of VHDL language), we further synthesized it using Xilinx XST tool,
bundled with ISE version 10.1i. The entire tool software is available on
request with authors.

\section{Conclusion}
We have presented a complete design methodology to design folded, pipelined
architectures for applications based on PG bipartite graphs. The underlying
scheme of partitioning is based on simple mathematical concepts, and hence
easy to implement. Usage of this methodology yields static interconnect between
various components, thus saving overheads of switch reconfigurations across
scheduling of various folds. Simple addressing schemes, no switch
reconfiguration etc. lead to ease of implementation, which is another
advantage. The design methodology is based on five levels of model
abstractions, and successive refinement between them. It has a close
correspondence with SpecC based system design methodology, and also
with general SoC design methodologies. It reinforces our belief that
practical, useful design flows can be implemented for this methodology. In
fact, a specific design of an LDPC decoder based on this methodology
was worked out in past \cite{ldpc_foldpat}. Alternate, dual methods of
folding have also been worked out as part of our research theme of folded
architectures \cite{cacs_pap}, \cite{expanders}. Work is ongoing to mould
these partitioning methods into complete alternate design methodologies. Given
the performance advantage of using PG in e.g. design of certain optimal
recent-generation error-correction codes \cite{expanders},
\cite{ldpc_pap}, we believe that such folding methodologies have more
potential scope of application in future.

\bibliography{ref}

\appendix
\section{Projective Spaces as Finite Field Extension}
\label{appA}

This appendix provides an overview of how the projective spaces are generated
from finite fields.  As mentioned before, projective spaces and their
lattices are built using vector subspaces of the \textbf{bijectively}
corresponding vector space, one dimension high, and their subsumption
relations. Vector spaces being extension fields, Galois fields are used to
practically construct projective spaces \cite{expanders}.

Consider a finite field {\large $\mathbb{F}$} = {\large $\mathbb{GF}(s)$}
with {\large $\mathbf{s}$} elements, where {\large
$\mathbf{s}=\mathbf{p^{k}}$}, {\large $\mathbf{p}$} being a prime number
and {\large $\mathbf{k}$} being a positive integer. A projective space of
dimension {\large $\mathbf{d}$} is denoted by {\large
${\mathbb{P}}(d,\mathbb{F})$} and consists of one-dimensional
vector subspaces of
the {\large $(\mathbf{d+1})$}-dimensional vector space over {\large
$\mathbb{F}$} (an extension field over {\large $\mathbb{F}$}), denoted by
{\large $\mathbb{F}^{d+1}$}. Elements of this vector space are denoted by
the sequence {\large $(\mathbf{x_{1},\ldots,x_{d+1}})$}, where each {\large
$\mathbf{x_{i}} \in \mathbb{F}$}. The total number of such elements are
{\large $\mathbf{s^{(d+1)}}$} = {\large $\mathbf{p^{k(d+1)}}$}. An
equivalence relation between these elements is defined as follows. Two
non-zero elements {\large ${\bf{x}}$}, {\large ${\bf{y}}$} are
\textit{equivalent} if there exists an element {\large $\lambda \in$}
{\large $\mathbb{GF}(\mathbf{s})$} such that {\large ${\bf{x}}=\lambda
{\bf{y}}$}. Clearly, each equivalence class consists of {\large
$\mathbf{s}$} elements of the field ({\large $(\mathbf{s-1})$} non-zero
elements and {\large ${\bf{0}}$}), and forms a one-dimensional
vector subspace.
Such 1-dimensional vector subspace corresponds to a \textbf{point} in the projective
space. Points are the zero-dimensional subspaces of the projective space.
Therefore, the total number of points in {\large
${\mathbb{P}}(d,\mathbb{F})$} are

{\large
\begin{equation}
\label{eq1}
P(d) = \frac{s^{d+1}-1}{s-1}
\end{equation}
}

An {\large $\mathbf{m}$}-dimensional projective subspace of {\large
${\mathbb{P}}(d,\mathbb{F})$} consists of all the one-dimensional
vector subspaces contained in an {\large
$(\mathbf{m+1})$}-dimensional subspace of the vector space.
The basis of this vector subspace will have {\large $(\mathbf{m+1})$}
linearly independent elements, say {\large $\mathbf{b_{0},\ldots,b_{m}}$}.
Every element of this vector subspace can be represented as a linear combination
of these basis vectors.
{\large
\begin{equation}
\label{eq2}
{\bf{x}} = \sum_{i=0}^{m} \alpha_{i} b_{i}, \textrm{ where } \alpha_{i} \in
\mathbb{F}(s)
\end{equation}
}

Clearly, the number of elements in the vector subspace are {\large
$\mathbf{s^{(m+1)}}$}.  The number of points contained in the {\large
$\mathbf{m}$}-dimensional projective subspace is given by {\large
$P(\mathbf{m})$} defined in equation (\ref{eq1}). This {\large
$(\mathbf{m+1})$}-dimensional vector subspace and the corresponding
projective subspace are said to have a \textit{co-dimension} of {\large
$\mathbf{r}=\mathbf{(d-m)}$} (the rank of the null space of this vector
subspace). Various properties such as degree etc. of a {\large
$\mathbf{m}$}-dimensional projective subspace remain same, when
this
subspace is bijectively mapped to {\large $(\mathbf{d-m-1})$}-dimensional
projective subspace, and vice-versa. This is known as the \textit{duality
principle} of projective spaces.

An example \textit{Finite Field} and the corresponding Projective Geometry
can be generated as follows. For a particular value of {\large
$\mathbf{s}$} in {\large $\mathbb{GF}$}(s), one needs to first find a
\textit{primitive polynomial} for the field. Such polynomials are
well-tabulated in various literature. For example, for the (smallest)
projective geometry, {\large $\mathbb{GF}$}({\large $2^3$}) is used for
generation. One primitive polynomial for this Finite Field is {\large
$\mathbf{(x^3+x+1)}$}. Powers of the root of this polynomial, {\large
$\mathbf{x}$}, are then successively taken, ({\large $2^3 -1$}) times,
modulo this polynomial, modulo-{\large 2}.  This means, {\large
$\mathbf{x^3}$} is substituted with {\large $(\mathbf{x+1})$}, wherever
required, since over base field {\large $\mathbb{GF}(2)$, -1 = 1}. A
\textit{sequence} of such evaluations lead to generation of the sequence of
{\large $(\mathbf{s-1})$} Finite field elements, \textbf{other than 0}.
Thus, the sequence of {\large $2^3$} elements for {\large
$\mathbb{GF}$}({\large $2^3$}) is \textbf{0(by default)}, {\large $\alpha^0
= 1, \alpha^1 = \alpha, \alpha^2 = \alpha^2, \alpha^3 = \alpha + 1,
\alpha^4 = \alpha^2 + \alpha, \alpha^5 = \alpha^2 + \alpha + 1, \alpha^6 =
\alpha^2 + 1$}.

\begin{figure*}[h]
\centerline{\subfloat[Line-point
        Association]{\includegraphics[scale=0.2]{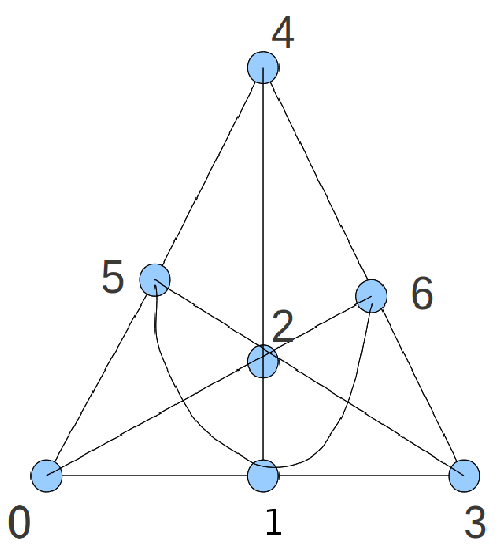}
\label{subfig1}}
\hfil
\subfloat[Bipartite
Representation]{\includegraphics[scale=0.4]{fano_bip}
\label{subfig2}}}
\caption{2-dimensional Projective Geometry}
\label{fano_pic}
\end{figure*}

To generate \textit{Projective Geometry} corresponding to above Galois
Field example({\large $\mathbb{GF}$}({\large $2^3$})), the
2-dimensional projective
plane, we treat each of the above \textit{non-zero} element, the
\textit{lone} non-zero element of various 1-dimensional vector subspaces, as
\uline{points} of the geometry. Further, we pick various subfields(vector
subspaces) of {\large $\mathbb{GF}$}({\large $2^3$}), and label them as
various \uline{lines}. Thus, the seven lines of the projective plane are
\{{\large 1, $\alpha$}, {\large $\alpha^3$} = {\large $1+\alpha$}\},
\{{\large 1, $\alpha^2$}, {\large $\alpha^6$} = {\large $1+\alpha^2$}\},
\{{\large $\alpha$}, {\large $\alpha^2$}, {\large $\alpha^4$} = {\large
$\alpha^2+\alpha$}\}, \{{\large 1,$\alpha^4$} = {\large $\alpha^2+\alpha$},
{\large $\alpha^5$} = {\large $\alpha^2+\alpha+1$}\}, \{{\large $\alpha$},
{\large $\alpha^5$} = {\large $\alpha^2+\alpha+1$}, {\large $\alpha^6$} =
{\large $\alpha^2+1$}\}, \{{\large $\alpha^2$}, {\large $\alpha^3$} =
{\large $\alpha+1$}, {\large $\alpha^5$} = {\large $\alpha^2+\alpha+1$}\}
and \{{\large $\alpha^3$} = {\large $1+\alpha$}, {\large $\alpha^4$} =
{\large $\alpha+\alpha^2$}, {\large $\alpha^6$} = {\large $1+\alpha^2$}\}.
The corresponding geometry can be seen as figures \ref{fano_pic}.

Let us denote the collection of all the {\large $\mathbf{l}$}-dimensional
projective subspaces by {\large $\mathbf{\Omega_{l}}$}. Now, {\large
$\mathbf{\Omega_{0}}$} represents the set of all the points of the
projective space, {\large $\mathbf{\Omega_{1}}$} is the set of all lines,
{\large $\mathbf{\Omega_{2}}$} is the set of all planes and so on. To count
the number of elements in each of these sets, we define the function

{\large
\begin{equation}
\label{eq3}
\phi(n,l,s)=\frac{(s^{n+1}-1)(s^{n}-1)\ldots(s^{n-l+1}-1)}{(s-1)(s^{2}-1)\ldots(s^{l+1}-1)}
\end{equation}
}

Now, the number of {\large $\mathbf{m}$}-dimensional projective subspaces of {\large
${\mathbb{P}}(d,\mathbb{F})$} is {\large $\phi(d,m,s)$}. For example, the
number of points contained in {\large ${\mathbb{P}}(d,F)$} is {\large
$\phi(d,0,s)$}. Also, the number of {\large $\mathbf{l}$}-dimensional projective 
subspaces contained in an {\large $\mathbf{m}$}-dimensional projective subspace (where
{\large $0 \leq l<m \leq d$}) is {\large $\phi(m,l,s)$}, while the number
of {\large $\mathbf{m}$}-dimensional projective subspaces containing a particular
{\large $\mathbf{l}$}-dimensional projective subspace is {\large
$\phi(d-l-1,m-l-1,s)$}.

\end{document}